\documentclass{article}
\usepackage{amsmath,graphicx,./style/spconf}
\usepackage{cite}
\usepackage[T1]{fontenc}
\usepackage{amsmath, amsbsy}
\usepackage{amsthm}
\usepackage{subfigure}
\usepackage{booktabs} 
\usepackage{multirow}
\usepackage{microtype}
\usepackage{balance}
\usepackage{colortbl}
\usepackage{xcolor}
\usepackage{bbm}
\usepackage[hidelinks]{hyperref}
\usepackage{comment}
\usepackage{circledsteps}

%
%
%
%
%

\usepackage{amssymb}
\usepackage{amsfonts}
\usepackage{mathrsfs}
\usepackage{xspace}
\usepackage{bm}
\usepackage{upgreek}

\newcommand{\safemath}[2]{\newcommand{#1}{\ensuremath{#2}\xspace}}



\safemath{\bma}{\mathbf{a}}
\safemath{\bmb}{\mathbf{b}}
\safemath{\bmc}{\mathbf{c}}
\safemath{\bmd}{\mathbf{d}}
\safemath{\bme}{\mathbf{e}}
\safemath{\bmf}{\mathbf{f}}
\safemath{\bmg}{\mathbf{g}}
\safemath{\bmh}{\mathbf{h}}
\safemath{\bmi}{\mathbf{i}}
\safemath{\bmj}{\mathbf{j}}
\safemath{\bmk}{\mathbf{k}}
\safemath{\bml}{\mathbf{l}}
\safemath{\bmm}{\mathbf{m}}
\safemath{\bmn}{\mathbf{n}}
\safemath{\bmo}{\mathbf{o}}
\safemath{\bmp}{\mathbf{p}}
\safemath{\bmq}{\mathbf{q}}
\safemath{\bmr}{\mathbf{r}}
\safemath{\bms}{\mathbf{s}}
\safemath{\bmt}{\mathbf{t}}
\safemath{\bmu}{\mathbf{u}}
\safemath{\bmv}{\mathbf{v}}
\safemath{\bmw}{\mathbf{w}}
\safemath{\bmx}{\mathbf{x}}
\safemath{\bmy}{\mathbf{y}}
\safemath{\bmz}{\mathbf{z}}
\safemath{\bmzero}{\mathbf{0}}
\safemath{\bmone}{\mathbf{1}}
\safemath{\Bell}{\ensuremath{\boldsymbol\ell}}

\bmdefine{\biad}{a}
\bmdefine{\bibd}{b}
\bmdefine{\bicd}{c}
\bmdefine{\bidd}{d}
\bmdefine{\bied}{e}
\bmdefine{\bifd}{f}
\bmdefine{\bigd}{g}
\bmdefine{\bihd}{h}
\bmdefine{\biid}{i}
\bmdefine{\bijd}{j}
\bmdefine{\bikd}{k}
\bmdefine{\bild}{l}
\bmdefine{\bimd}{m}
\bmdefine{\bind}{n}
\bmdefine{\biod}{o}
\bmdefine{\bipd}{p}
\bmdefine{\biqd}{q}
\bmdefine{\bird}{r}
\bmdefine{\bisd}{s}
\bmdefine{\bitd}{t}
\bmdefine{\biud}{u}
\bmdefine{\bivd}{v}
\bmdefine{\biwd}{w}
\bmdefine{\bixd}{x}
\bmdefine{\biyd}{y}
\bmdefine{\bizd}{z}

\bmdefine{\bixid}{\xi}
\bmdefine{\bilambdad}{\lambda}
\bmdefine{\bimud}{\mu}
\bmdefine{\bithetad}{\theta}
\bmdefine{\biphid}{\phi}
\bmdefine{\bideltad}{\delta}

\safemath{\bmia}{\biad}
\safemath{\bmib}{\bibd}
\safemath{\bmic}{\bicd}
\safemath{\bmid}{\bidd}
\safemath{\bmie}{\bied}
\safemath{\bmif}{\bifd}
\safemath{\bmig}{\bigd}
\safemath{\bmih}{\bihd}
\safemath{\bmii}{\biid}
\safemath{\bmij}{\bijd}
\safemath{\bmik}{\bikd}
\safemath{\bmil}{\bild}
\safemath{\bmim}{\bimd}
\safemath{\bmin}{\bind}
\safemath{\bmio}{\biod}
\safemath{\bmip}{\bipd}
\safemath{\bmiq}{\biqd}
\safemath{\bmir}{\bird}
\safemath{\bmis}{\bisd}
\safemath{\bmit}{\bitd}
\safemath{\bmiu}{\biud}
\safemath{\bmiv}{\bivd}
\safemath{\bmiw}{\biwd}
\safemath{\bmix}{\bixd}
\safemath{\bmiy}{\biyd}
\safemath{\bmiz}{\bizd}

\safemath{\bmxi}{\bixid}
\safemath{\bmlambda}{\bilambdad}
\safemath{\bmmu}{\bimud}
\safemath{\bmtheta}{\bithetad}
\safemath{\bmphi}{\biphid}
\safemath{\bmdelta}{\bideltad}

\safemath{\bA}{\mathbf{A}}
\safemath{\bB}{\mathbf{B}}
\safemath{\bC}{\mathbf{C}}
\safemath{\bD}{\mathbf{D}}
\safemath{\bE}{\mathbf{E}}
\safemath{\bF}{\mathbf{F}}
\safemath{\bG}{\mathbf{G}}
\safemath{\bH}{\mathbf{H}}
\safemath{\bI}{\mathbf{I}}
\safemath{\bJ}{\mathbf{J}}
\safemath{\bK}{\mathbf{K}}
\safemath{\bL}{\mathbf{L}}
\safemath{\bM}{\mathbf{M}}
\safemath{\bN}{\mathbf{N}}
\safemath{\bO}{\mathbf{O}}
\safemath{\bP}{\mathbf{P}}
\safemath{\bQ}{\mathbf{Q}}
\safemath{\bR}{\mathbf{R}}
\safemath{\bS}{\mathbf{S}}
\safemath{\bT}{\mathbf{T}}
\safemath{\bU}{\mathbf{U}}
\safemath{\bV}{\mathbf{V}}
\safemath{\bW}{\mathbf{W}}
\safemath{\bX}{\mathbf{X}}
\safemath{\bY}{\mathbf{Y}}
\safemath{\bZ}{\mathbf{Z}}

\safemath{\bZero}{\mathbf{0}}
\safemath{\bOne}{\mathbf{1}}
\safemath{\bDelta}{\mathbf{\Delta}}
\safemath{\bLambda}{\mathbf{\UpLambda}}
\safemath{\bPhi}{\mathbf{\Upphi}}
\safemath{\bSigma}{\mathbf{\Upsigma}}
\safemath{\bOmega}{\mathbf{\Upomega}}
\safemath{\bTheta}{\mathbf{\Uptheta}}

\bmdefine{\biAd}{A}
\bmdefine{\biBd}{B}
\bmdefine{\biCd}{C}
\bmdefine{\biDd}{D}
\bmdefine{\biEd}{E}
\bmdefine{\biFd}{F}
\bmdefine{\biGd}{G}
\bmdefine{\biHd}{H}
\bmdefine{\biId}{I}
\bmdefine{\biJd}{J}
\bmdefine{\biKd}{K}
\bmdefine{\biLd}{L}
\bmdefine{\biMd}{M}
\bmdefine{\biOd}{N}
\bmdefine{\biPd}{O}
\bmdefine{\biQd}{P}
\bmdefine{\biRd}{R}
\bmdefine{\biSd}{S}
\bmdefine{\biTd}{T}
\bmdefine{\biUd}{U}
\bmdefine{\biVd}{V}
\bmdefine{\biWd}{W}
\bmdefine{\biXd}{X}
\bmdefine{\biYd}{Y}
\bmdefine{\biZd}{Z}

\bmdefine{\biDelta}{\Delta}
\bmdefine{\biLambda}{\Lambda}
\bmdefine{\biPhi}{\Phi}
\bmdefine{\biSigma}{\Sigma}
\bmdefine{\biOmega}{\Omega}
\bmdefine{\biTheta}{\Theta}

\safemath{\bimA}{\biAd}
\safemath{\bimB}{\biBd}
\safemath{\bimC}{\biCd}
\safemath{\bimD}{\biDd}
\safemath{\bimE}{\biEd}
\safemath{\bimF}{\biFd}
\safemath{\bimG}{\biGd}
\safemath{\bimH}{\biHd}
\safemath{\bimI}{\biId}
\safemath{\bimJ}{\biJd}
\safemath{\bimK}{\biKd}
\safemath{\bimL}{\biLd}
\safemath{\bimM}{\biMd}
\safemath{\bimN}{\biNd}
\safemath{\bimO}{\biOd}
\safemath{\bimP}{\biPd}
\safemath{\bimQ}{\biQd}
\safemath{\bimR}{\biRd}
\safemath{\bimS}{\biSd}
\safemath{\bimT}{\biTd}
\safemath{\bimU}{\biUd}
\safemath{\bimV}{\biVd}
\safemath{\bimW}{\biWd}
\safemath{\bimX}{\biXd}
\safemath{\bimY}{\biYd}
\safemath{\bimZ}{\biZd}

\safemath{\bimDelta}{\biDelta}
\safemath{\bimLambda}{\biLambda}
\safemath{\bimPhi}{\biPhi}
\safemath{\bimSigma}{\biSigma}
\safemath{\bimOmega}{\biOmega}
\safemath{\bimTheta}{\biTheta}

\safemath{\setA}{\mathcal{A}}
\safemath{\setB}{\mathcal{B}}
\safemath{\setC}{\mathcal{C}}
\safemath{\setD}{\mathcal{D}}
\safemath{\setE}{\mathcal{E}}
\safemath{\setF}{\mathcal{F}}
\safemath{\setG}{\mathcal{G}}
\safemath{\setH}{\mathcal{H}}
\safemath{\setI}{\mathcal{I}}
\safemath{\setJ}{\mathcal{J}}
\safemath{\setK}{\mathcal{K}}
\safemath{\setL}{\mathcal{L}}
\safemath{\setM}{\mathcal{M}}
\safemath{\setN}{\mathcal{N}}
\safemath{\setO}{\mathcal{O}}
\safemath{\setP}{\mathcal{P}}
\safemath{\setQ}{\mathcal{Q}}
\safemath{\setR}{\mathcal{R}}
\safemath{\setS}{\mathcal{S}}
\safemath{\setT}{\mathcal{T}}
\safemath{\setU}{\mathcal{U}}
\safemath{\setV}{\mathcal{V}}
\safemath{\setW}{\mathcal{W}}
\safemath{\setX}{\mathcal{X}}
\safemath{\setY}{\mathcal{Y}}
\safemath{\setZ}{\mathcal{Z}}
\safemath{\emptySet}{\varnothing}

\safemath{\colA}{\mathscr{A}}
\safemath{\colB}{\mathscr{B}}
\safemath{\colC}{\mathscr{C}}
\safemath{\colD}{\mathscr{D}}
\safemath{\colE}{\mathscr{E}}
\safemath{\colF}{\mathscr{F}}
\safemath{\colG}{\mathscr{G}}
\safemath{\colH}{\mathscr{H}}
\safemath{\colI}{\mathscr{I}}
\safemath{\colJ}{\mathscr{J}}
\safemath{\colK}{\mathscr{K}}
\safemath{\colL}{\mathscr{L}}
\safemath{\colM}{\mathscr{M}}
\safemath{\colN}{\mathscr{N}}
\safemath{\colO}{\mathscr{O}}
\safemath{\colP}{\mathscr{P}}
\safemath{\colQ}{\mathscr{Q}}
\safemath{\colR}{\mathscr{R}}
\safemath{\colS}{\mathscr{S}}
\safemath{\colT}{\mathscr{T}}
\safemath{\colU}{\mathscr{U}}
\safemath{\colV}{\mathscr{V}}
\safemath{\colW}{\mathscr{W}}
\safemath{\colX}{\mathscr{X}}
\safemath{\colY}{\mathscr{Y}}
\safemath{\colZ}{\mathscr{Z}}

\safemath{\opA}{\mathbb{A}}
\safemath{\opB}{\mathbb{B}}
\safemath{\opC}{\mathbb{C}}
\safemath{\opD}{\mathbb{D}}
\safemath{\opE}{\mathbb{E}}
\safemath{\opF}{\mathbb{F}}
\safemath{\opG}{\mathbb{G}}
\safemath{\opH}{\mathbb{H}}
\safemath{\opI}{\mathbb{I}}
\safemath{\opJ}{\mathbb{J}}
\safemath{\opK}{\mathbb{K}}
\safemath{\opL}{\mathbb{L}}
\safemath{\opM}{\mathbb{M}}
\safemath{\opN}{\mathbb{N}}
\safemath{\opO}{\mathbb{O}}
\safemath{\opP}{\mathbb{P}}
\safemath{\opQ}{\mathbb{Q}}
\safemath{\opR}{\mathbb{R}}
\safemath{\opS}{\mathbb{S}}
\safemath{\opT}{\mathbb{T}}
\safemath{\opU}{\mathbb{U}}
\safemath{\opV}{\mathbb{V}}
\safemath{\opW}{\mathbb{W}}
\safemath{\opX}{\mathbb{X}}
\safemath{\opY}{\mathbb{Y}}
\safemath{\opZ}{\mathbb{Z}}
\safemath{\opZero}{\mathbb{O}}
\safemath{\identityop}{\opI}


\safemath{\veca}{\bma}
\safemath{\vecb}{\bmb}
\safemath{\vecc}{\bmc}
\safemath{\vecd}{\bmd}
\safemath{\vece}{\bme}
\safemath{\vecf}{\bmf}
\safemath{\vecg}{\bmg}
\safemath{\vech}{\bmh}
\safemath{\veci}{\bmi}
\safemath{\vecj}{\bmj}
\safemath{\veck}{\bmk}
\safemath{\vecl}{\bml}
\safemath{\vecm}{\bmm}
\safemath{\vecn}{\bmn}
\safemath{\veco}{\bmo}
\safemath{\vecp}{\bmp}
\safemath{\vecq}{\bmq}
\safemath{\vecr}{\bmr}
\safemath{\vecs}{\bms}
\safemath{\vect}{\bmt}
\safemath{\vecu}{\bmu}
\safemath{\vecv}{\bmv}
\safemath{\vecw}{\bmw}
\safemath{\vecx}{\bmx}
\safemath{\vecy}{\bmy}
\safemath{\vecz}{\bmz}

\safemath{\veczero}{\bmzero}
\safemath{\vecone}{\bmone}
\safemath{\vecxi}{\bmxi}
\safemath{\veclambda}{\bmlambda}
\safemath{\vecmu}{\bmmu}
\safemath{\vectheta}{\bmtheta}
\safemath{\vecphi}{\bmphi}
\safemath{\vecdelta}{\bmdelta}

\safemath{\matA}{\bA}
\safemath{\matB}{\bB}
\safemath{\matC}{\bC}
\safemath{\matD}{\bD}
\safemath{\matE}{\bE}
\safemath{\matF}{\bF}
\safemath{\matG}{\bG}
\safemath{\matH}{\bH}
\safemath{\matI}{\bI}
\safemath{\matJ}{\bJ}
\safemath{\matK}{\bK}
\safemath{\matL}{\bL}
\safemath{\matM}{\bM}
\safemath{\matN}{\bN}
\safemath{\matO}{\bO}
\safemath{\matP}{\bP}
\safemath{\matQ}{\bQ}
\safemath{\matR}{\bR}
\safemath{\matS}{\bS}
\safemath{\matT}{\bT}
\safemath{\matU}{\bU}
\safemath{\matV}{\bV}
\safemath{\matW}{\bW}
\safemath{\matX}{\bX}
\safemath{\matY}{\bY}
\safemath{\matZ}{\bZ}
\safemath{\matzero}{\bmzero}

\safemath{\matDelta}{\bDelta}
\safemath{\matLambda}{\bLambda}
\safemath{\matPhi}{\bPhi}
\safemath{\matSigma}{\bSigma}
\safemath{\matOmega}{\bOmega}
\safemath{\matTheta}{\bTheta}

\safemath{\matidentity}{\matI}
\safemath{\matone}{\matO}


\safemath{\rnda}{A}
\safemath{\rndb}{B}
\safemath{\rndc}{C}
\safemath{\rndd}{D}
\safemath{\rnde}{E}
\safemath{\rndf}{F}
\safemath{\rndg}{G}
\safemath{\rndh}{H}
\safemath{\rndi}{I}
\safemath{\rndj}{J}
\safemath{\rndk}{K}
\safemath{\rndl}{L}
\safemath{\rndm}{M}
\safemath{\rndn}{N}
\safemath{\rndo}{O}
\safemath{\rndp}{P}
\safemath{\rndq}{Q}
\safemath{\rndr}{R}
\safemath{\rnds}{S}
\safemath{\rndt}{T}
\safemath{\rndu}{U}
\safemath{\rndv}{V}
\safemath{\rndw}{W}
\safemath{\rndx}{X}
\safemath{\rndy}{Y}
\safemath{\rndz}{Z}

\safemath{\rveca}{\bimA}
\safemath{\rvecb}{\bimB}
\safemath{\rvecc}{\bimC}
\safemath{\rvecd}{\bimD}
\safemath{\rvece}{\bimE}
\safemath{\rvecf}{\bimF}
\safemath{\rvecg}{\bimG}
\safemath{\rvech}{\bimH}
\safemath{\rveci}{\bimI}
\safemath{\rvecj}{\bimJ}
\safemath{\rveck}{\bimK}
\safemath{\rvecl}{\bimL}
\safemath{\rvecm}{\bimM}
\safemath{\rvecn}{\bimN}
\safemath{\rveco}{\bomO}
\safemath{\rvecp}{\bimP}
\safemath{\rvecq}{\bimQ}
\safemath{\rvecr}{\bimR}
\safemath{\rvecs}{\bimS}
\safemath{\rvect}{\bimT}
\safemath{\rvecu}{\bimU}
\safemath{\rvecv}{\bimV}
\safemath{\rvecw}{\bimW}
\safemath{\rvecx}{\bimX}
\safemath{\rvecy}{\bimY}
\safemath{\rvecz}{\bimZ}

\safemath{\rvecxi}{\bmxi}
\safemath{\rveclambda}{\bmlambda}
\safemath{\rvecmu}{\bmmu}
\safemath{\rvectheta}{\bmtheta}
\safemath{\rvecphi}{\bmphi}

\safemath{\rmatA}{\bimA}
\safemath{\rmatB}{\bimB}
\safemath{\rmatC}{\bimC}
\safemath{\rmatD}{\bimD}
\safemath{\rmatE}{\bimE}
\safemath{\rmatF}{\bimF}
\safemath{\rmatG}{\bimG}
\safemath{\rmatH}{\bimH}
\safemath{\rmatI}{\bimI}
\safemath{\rmatJ}{\bimJ}
\safemath{\rmatK}{\bimK}
\safemath{\rmatL}{\bimL}
\safemath{\rmatM}{\bimM}
\safemath{\rmatN}{\bimN}
\safemath{\rmatO}{\bimO}
\safemath{\rmatP}{\bimP}
\safemath{\rmatQ}{\bimQ}
\safemath{\rmatR}{\bimR}
\safemath{\rmatS}{\bimS}
\safemath{\rmatT}{\bimT}
\safemath{\rmatU}{\bimU}
\safemath{\rmatV}{\bimV}
\safemath{\rmatW}{\bimW}
\safemath{\rmatX}{\bimX}
\safemath{\rmatY}{\bimY}
\safemath{\rmatZ}{\bimZ}

\safemath{\rmatDelta}{\bimDelta}
\safemath{\rmatLambda}{\bimLambda}
\safemath{\rmatPhi}{\bimPhi}
\safemath{\rmatSigma}{\bimSigma}
\safemath{\rmatOmega}{\bimOmega}
\safemath{\rmatTheta}{\bimTheta}

%
%

\usepackage{amssymb}
\usepackage{amsfonts}
\usepackage{mathrsfs}
\usepackage{xspace}
\usepackage{bm}
\usepackage{fancyref}
\usepackage{textcomp}

\usepackage{multirow}
\usepackage{stmaryrd}


\newenvironment{textbmatrix}{	\setlength{\arraycolsep}{2.5pt}%
								\left[\begin{matrix}}{\end{matrix}\right]%
								\raisebox{0.08ex}{\vphantom{M}}}


\def\be{\begin{equation}}
\def\ee{\end{equation}}
\def\een{\nonumber \end{equation}}
\def\mat{\begin{bmatrix}}
\def\emat{\end{bmatrix}}
\def\btm{\begin{textbmatrix}}
\def\etm{\end{textbmatrix}}

\def\ba#1\ea{\begin{align}#1\end{align}}
\def\bas#1\eas{\begin{align*}#1\end{align*}}
\def\bs#1\es{\begin{split}#1\end{split}}
\def\bg#1\eg{\begin{gather}#1\end{gather}}
\def\bml#1\eml{\begin{multline}#1\end{multline}}
\def\bi#1\ei{\begin{itemize}#1\end{itemize}}



\newcommand{\lefto}{\mathopen{}\left}



\DeclareMathOperator*{\argmin}{arg\;min}		
\DeclareMathOperator*{\argmax}{arg\;max}		
\DeclareMathOperator{\Exop}{\opE}			


\newcommand{\Ex}[2]{\ensuremath{\Exop_{#1}\lefto[#2\right]}} 	




\safemath{\dirac}{\delta}					
\safemath{\krond}{\dirac}					

\safemath{\upto}{\uparrow}
\safemath{\downto}{\downarrow}
\safemath{\iu}{j}							
\safemath{\ev}{\lambda}						
\safemath{\hilseqspace}{l^{2}}				
\newcommand{\banachfunspace}[1]{\setL^{#1}}	
\safemath{\hilfunspace}{\banachfunspace{2}}	
\newcommand{\floor}[1]{\lfloor #1 \rfloor}

\safemath{\SNR}{\textit{SNR}} 				
\safemath{\PAR}{\textit{PAR}} 				
\safemath{\No}{N_0}							
\safemath{\Es}{E_s}							
\safemath{\Eb}{E_b}							
\safemath{\EbNo}{\frac{\Eb}{\No}}
\safemath{\EsNo}{\frac{\Es}{\No}}

\DeclareMathOperator{\CHop}{\ensuremath{\opH}} 
\safemath{\tvir}{\rndh_{\CHop}}				
\safemath{\tvtf}{\rndl_{\CHop}}				
\safemath{\spf}{\rnds_{\CHop}}				
\safemath{\bff}{H_{\CHop}}					

\safemath{\ircf}{r_{h}}						
\safemath{\tftvcf}{r_{s}}					
\safemath{\tfcf}{r_{l}}						
\safemath{\bfcf}{r_{H}}						

\safemath{\tcorr}{c_h}						
\safemath{\scf}{c_{s}}						
\safemath{\tfcorr}{c_{l}}					
\safemath{\fcorr}{c_{H}}						

\safemath{\mi}{I}							
\safemath{\capacity}{C}						

\safemath{\normal}{\mathcal{N}}			
\safemath{\jpg}{\mathcal{CN}}			
\safemath{\mchain}{\leftrightarrow}		

\safemath{\dB}{\,\mathrm{dB}}
\safemath{\dBm}{\,\mathrm{dBm}}
\safemath{\Hz}{\,\mathrm{Hz}}
\safemath{\kHz}{\,\mathrm{kHz}}
\safemath{\MHz}{\,\mathrm{MHz}}
\safemath{\GHz}{\,\mathrm{GHz}}
\safemath{\s}{\,\mathrm{s}}
\safemath{\ms}{\,\mathrm{ms}}
\safemath{\mus}{\,\mathrm{\text{\textmu}s}}
\safemath{\ns}{\,\mathrm{ns}}
\safemath{\ps}{\,\mathrm{ps}}
\safemath{\meter}{\,\mathrm{m}}
\safemath{\mm}{\,\mathrm{mm}}
\safemath{\cm}{\,\mathrm{cm}}
\safemath{\m}{\,\mathrm{m}}
\safemath{\W}{\,\mathrm{W}}
\safemath{\mW}{\, \mathrm{mW}}
\safemath{\J}{\,\mathrm{J}}
\safemath{\K}{\,\mathrm{K}}
\safemath{\bit}{\,\mathrm{bit}}
\safemath{\nat}{\,\mathrm{nat}}


\safemath{\define}{\triangleq}			

\safemath{\equivalent}{\sim}
\safemath{\distas}{\sim}					
\safemath{\sdiff}{\Delta}				

\safemath{\reals}{\mathbb{R}}
\safemath{\positivereals}{\reals_{+}}
\safemath{\integers}{\mathbb{Z}}
\safemath{\posint}{\integers_{+}}
\safemath{\naturals}{\mathbb{N}}
\safemath{\posnaturals}{\naturals_{+}}
\safemath{\complexset}{\mathbb{C}}
\safemath{\rationals}{\mathbb{Q}}

\newcommand*{\fancyrefapplabelprefix}{app}		
\newcommand*{\fancyrefthmlabelprefix}{thm}		
\newcommand*{\fancyreflemlabelprefix}{lem}		
\newcommand*{\fancyrefcorlabelprefix}{cor}		
\newcommand*{\fancyrefdeflabelprefix}{def}		
\newcommand*{\fancyrefproplabelprefix}{prop}		
\newcommand*{\fancyrefexmpllabelprefix}{exmpl}
\newcommand*{\fancyrefalglabelprefix}{alg}		
\newcommand*{\fancyreftbllabelprefix}{tbl}		

\frefformat{vario}{\fancyrefseclabelprefix}{Sec.~#1}
\frefformat{vario}{\fancyrefthmlabelprefix}{Thm.~#1}
\frefformat{vario}{\fancyreftbllabelprefix}{Tbl.~#1}
\frefformat{vario}{\fancyreflemlabelprefix}{Lem.~#1}
\frefformat{vario}{\fancyrefcorlabelprefix}{Corr.~#1}
\frefformat{vario}{\fancyrefdeflabelprefix}{Def.~#1}
\frefformat{vario}{\fancyreffiglabelprefix}{Fig.~#1}
\frefformat{vario}{\fancyrefapplabelprefix}{App.~#1}
\frefformat{vario}{\fancyrefeqlabelprefix}{(#1)}
\frefformat{vario}{\fancyrefproplabelprefix}{Prop.~#1}
\frefformat{vario}{\fancyrefexmpllabelprefix}{Ex.~#1}
\frefformat{vario}{\fancyrefalglabelprefix}{Alg.~#1}

 \newtheorem{prop}{Proposition}

 \newtheorem{ass}{Assumption}
 \newtheorem*{remark}{Remark}

\safemath{\dictab}{[\,\dicta\,\,\dictb\,]}

\safemath{\ysig}{\bmy}
\safemath{\ysighat}{\hat{\ysig}}
\safemath{\ysigdim}{M}
\safemath{\xsig}{\bmx}
\safemath{\xsigdim}{N}
\safemath{\nx}{n_x}
\safemath{\zsig}{\bmz}
\safemath{\zsigdim}{\ysigdim}
\safemath{\rsig}{\bmr}
\safemath{\Adict}{\bA}
\safemath{\Adicttilde}{\widetilde{\Adict}}
\safemath{\Adictdim}{\outputdim\times\xsigdim}
\safemath{\avec}{\bma}
\safemath{\avectilde}{\tilde{\avec}}
\safemath{\Bdict}{\bB}
\safemath{\Bdicttilde}{\widetilde{\Bdict}}
\safemath{\Cdict}{\bC}
\safemath{\cvec}{\bmc}
\safemath{\Ddict}{\bD}
\safemath{\Ddictdim}{\ysigdim\times\xsigdim}
\safemath{\dvec}{\bmd}
\safemath{\Ddicttilde}{\widetilde{\bD}}
\safemath{\Bonb}{\bB}
\safemath{\bvec}{\bmb}
\safemath{\Bonbdim}{\ysigdim\times\ysigdim}
\safemath{\noise}{\bmn}
\safemath{\noisedim}{\ysigim}
\safemath{\err}{\bme}
\safemath{\errdim}{\ysigdim}
\safemath{\errset}{\setE}
\safemath{\nerr}{n_e}
\safemath{\delop}{\bP_\errset}
\safemath{\delopc}{\bP_{{\errset}^c}}

%

\safemath{\cplxi}{\imath}
\safemath{\cplxj}{\jmath}

\safemath{\dict}{\matD}
\safemath{\inputdim}{N}		
\safemath{\outputdim}{M}		
\safemath{\sparsity}{S}	
\safemath{\inputdimA}{{N_a}}	
\safemath{\inputdimB}{{N_b}}	
\safemath{\elemA}{{n_a}}	
\safemath{\elemB}{{n_b}}	
\safemath{\resA}{\matR_a}	
\safemath{\resB}{\matR_b}	
\safemath{\subD}{\matS} 
\safemath{\subA}{\matS_a} 
\safemath{\subB}{\matS_b} 
\safemath{\dicta}{\matA} 	
\safemath{\dictb}{\matB} 	
\safemath{\hollowS}{H}
\safemath{\hollowA}{H_a}
\safemath{\hollowB}{H_b}
\safemath{\cross}{Z}
\safemath{\coh}{\mu_d}			
\safemath{\coha}{\mu_a}			
\safemath{\cohb}{\mu_b}			
\safemath{\mubs}{\nu}	
\safemath{\cohm}{\mu_m} 
\safemath{\dictset}{\setD}	
\safemath{\dictsetp}{\dictset(\coh,\coha,\cohb)}	
\safemath{\dictsetgen}{\dictset_\text{gen}}
\safemath{\dictsetgenp}{\dictsetgen(\coh)}
\safemath{\dictsetonb}{\dictset_\text{onb}}
\safemath{\dictsetonbp}{\dictsetonb(\coh)}

\safemath{\leftside}{U}
\safemath{\rightsideA}{R_a}
\safemath{\rightsideB}{R_b}

\safemath{\indexS}{\setI_S} 

\safemath{\na}{n_a}			
\safemath{\nb}{n_b}			
\safemath{\coeffa}{p_i}	
\safemath{\coeffb}{q_j}	
\safemath{\seta}{\setP}		
\safemath{\setb}{\setQ}     
\safemath{\setw}{\setW}	
\safemath{\setz}{\setZ}	
\safemath{\cola}{\veca}		
\safemath{\colb}{\vecb}		
\safemath{\cold}{\vecd}		
\safemath{\inputvec}{\vecx} 	
\safemath{\error}{\vece}	
\safemath{\noiseout}{\vecz} 	
\safemath{\inputvecel}{x}
\safemath{\inputveca}{\vecx_a}
\safemath{\inputvecb}{\vecx_b}
\safemath{\outputvec}{\vecy}	
\safemath{\lambdamin}{\lambda_{\mathrm{min}}}


\safemath{\elltwo}{\ell_2}
\safemath{\ellone}{\ell_1}
\safemath{\ellzero}{\ell_0}
\safemath{\ellinf}{\ell_\infty}
\safemath{\ellinftilde}{\ell_{\widetilde\infty}}
\safemath{\licard}{Z(\coh,\coha,\cohb)}
\safemath{\xsol}{\hat{x}}
\safemath{\xbord}{x_b}		
\safemath{\xstat}{x_s}		
\safemath{\xstatLone}{\tilde{x}_s}
\safemath{\order}{\mathcal{O}} 
\safemath{\scales}{\Theta} 
\safemath{\ones}{\mathbf{1}} 
\safemath{\zeroes}{\mathbf{0}} 
\safemath{\thlone}{\kappa(\coh,\cohb)} 
\safemath{\constoneA}{\delta} 
\safemath{\constoneB}{\epsilon} 
\safemath{\nlarge}{L}				   
\safemath{\sumlarge}{S_\nlarge}
\safemath{\maxlarger}{P_\nlarge}	   
\safemath{\Pzero}{\textrm{P0}}	
\safemath{\Pone}{\textrm{P1}}
\safemath{\vecfir}{\vecw}			 
\safemath{\vecsec}{\vecz}
\safemath{\elvecfir}{w}              
\safemath{\elvecsec}{z}				 
\safemath{\nlargefir}{n}
\safemath{\normout}{\gamma}
\safemath{\auxfun}{h}
\safemath{\supp}{\textrm{supp}}

\safemath{\indexa}{\ell}
\safemath{\indexb}{r}
\safemath{\indexc}{i}
\safemath{\indexd}{j}

\safemath{\project}{P}

\newcommand*\bell{\ensuremath{\boldsymbol\ell}}

\definecolor{mycolor7}{rgb}{0.63500,0.07800,0.18400}%
\newcommand*\circled[1]{\Circled[inner color=white, fill color= mycolor7, outer color=mycolor7]{\footnotesize{#1}}}

\newcommand\blfootnote[1]{%
  \begingroup
  \renewcommand\thefootnote{}\footnotetext{#1}%
  \endgroup
}

\newif\ifshowarxiv
\showarxivtrue

\ifshowarxiv
\newenvironment{showarxiv}
{
}
{
}
\excludecomment{showicassp}
\else
\newenvironment{showicassp}
{
}
{
}
\excludecomment{showarxiv}
\fi

\title{\mbox{Bit Error and Block Error Rate Training for ML-Assisted Communication}}
%
\name{Reinhard Wiesmayr$^{\,\star,1}$, Gian Marti$^{\,\star,1}$, Chris Dick$^2$, Haochuan Song$^3$, and Christoph Studer$^1$\vspace{-.5cm}}
\address{\small $^\star$equal contribution; $^1$ETH Zurich, $^2$NVIDIA, $^3$Southeast University \vspace{-0.05cm} \\ 
\small E-mail: wiesmayr@iis.ee.ethz.ch, marti@iis.ee.ethz.ch, cdick@nvidia.com, hcsong@seu.edu.cn, studer@ethz.ch
\vspace{-.3cm}
}

\begin{document}
\maketitle
\begin{abstract}
Even though machine learning (ML) techniques are being widely used in communications, 
the question of \emph{how} to train communication systems has received surprisingly little attention. 
In this paper, we show that the commonly used binary cross-entropy (BCE) loss is a sensible choice
in uncoded systems, e.g., for training ML-assisted data detectors, but may not be optimal in coded systems. 
We propose new loss functions targeted at minimizing the block error rate and
SNR deweighting, a novel method that trains communication systems for 
optimal performance over a range of signal-to-noise ratios. 
The utility of the proposed loss functions as well as of SNR deweighting is shown through simulations in NVIDIA~Sionna.

\end{abstract}
\begin{showicassp}
\blfootnote{An extended version of this work that includes an appendix with all the proofs,
some information-theoretic comments, and experiments for another communication scenario
is available on arXiv \cite{wiesmayr2022arxiv}.
}
\end{showicassp}
\begin{showarxiv}
\blfootnote{A shorter version of this paper has been submitted to the 2023 IEEE International 
Conference on Acoustics, Speech, and Signal Processing (ICASSP).
}
\end{showarxiv}
\blfootnote{All code and simulation scripts to reproduce the results of this paper are available on GitHub:
{https://github.com/IIP-Group/BLER\_Training}
}
\blfootnote{The authors thank Oscar Casta\~neda for comments and suggestions.}

\vspace{-0.225cm}
\section{Introduction}
\label{sec:intro}
\vspace{-0.225cm}
Machine learning (ML) has revolutionized a large number of fields, including communications.
The availability of software frameworks, such as TensorFlow \cite{tensorflow2015-whitepaper} and, recently, NVIDIA Sionna \cite{Hoydis2022}, has made  implementation and 
training of ML-assisted communication systems convenient. 
Existing results in ML-assisted communication systems range from the atomistic improvement of data detectors 
(e.g., using deep \mbox{unfolding)} \cite{o2017deep, samuel2019learning, khani2020adaptive, BalatsoukasStimming2019}
to model-free learning of end-to-end communication systems \cite{dorner2017deep, aoudia2019model, song2022benchmarking}.
Quite surprisingly, only little attention has been devoted to the question of \emph{how} ML-assisted communication systems 
should be trained. In particular, the choice of the cost function is seldom discussed
(see, e.g., the recent overview papers \cite{ly2021review, albreem2021deep})
and---given the similarity between
communication and classification---one usually resorts to an empirical cross-entropy (CE) loss \cite{gruber2017deep, xu2017improved, nachmani2018deep, Cammerer2020, honkala2021deeprx, song2021soft}. 
The question of training a communication system for good performance over a range of signal-to-noise ratios (SNRs) is another issue that has not been
seriously investigated. 
Systems are usually trained on samples from only one SNR \cite{o2017deep,aoudia2019model}, 
or on samples uniformly drawn from the targeted SNR range \cite{samuel2019learning, nachmani2018deep, honkala2021deeprx}, apparently without
questioning how this may affect performance for different~SNRs. 

In this paper, we investigate how ML-assisted communication systems should be trained.
We first consider the case where the intended goal is to minimize the uncoded bit error rate (BER) and discuss
why the empirical binary cross-entropy (BCE) loss is indeed a sensible choice in uncoded systems,
e.g., for data detectors in isolation.
However, in most practical communication applications, the relevant figure of merit is the (coded) block error rate (BLER), 
as opposed to the BER,
since block errors %
cause undesirable retransmissions \cite[Sec.\,9.2]{Dahlman}, whereas
(coded) bit errors themselves are irrelevant.\footnote{For this reason, physical layer (PHY) quality-of-service is assessed only 
in terms of BLER (not BER) in 3GPP LTE and other standards. 
Reference \cite{Lipovac2014} notes that the relation
between BER and BLER can be inconsistent.} 
We underpin that minimizing the (coded) BER is not equivalent to minimizing the BLER.
This observation calls into question the common practice
of training coded systems with loss functions that penalize individual bit errors (such as the empirical BCE),
and thus optimize for the (irrelevant) coded BER instead of the BLER.
In response, we propose a range of novel loss functions that aim at minimizing the BLER by penalizing bit errors \emph{jointly}. 
We also show that training on samples that are uniformly drawn from a target SNR range will focus primarily on 
the low-SNR region while neglecting high-SNR performance. As a remedy, we propose a new technique called
SNR deweighting. 
We evaluate the impact of the different loss functions as well as of SNR deweighting through simulations in NVIDIA Sionna \cite{Hoydis2022}.
\begin{showicassp}%
All proofs, as well as additional analysis and experiments, are included in the extended arXiv version \cite{wiesmayr2022arxiv}.
\end{showicassp}

\vspace{-0.275cm}
\section{Training for Bit Error Rate}\label{sec:ber_training}
\vspace{-0.275cm}

ML-assisted communication systems are typically trained with a focus on minimizing the (uncoded) BER \cite{samuel2019learning, honkala2021deeprx}, 
under a tacit assumption that the learned system could then be used in combination with a
forward error correction (FEC) scheme to ensure reliable communication.\footnote{
The discussion also applies to systems that already include FEC, 
but we argue in Secs.~1 and~3 that minimizing the coded BER is a category mistake.}
Due to the similarity between detection and classification, the strategy 
typically consists of (approximately) minimizing the empirical BCE\footnote{When
we speak of the BCE between vectors, we mean the sum of binary CEs
between the individual components as defined in \eqref{eq:lbce}, 
and not the categorical CE between the bit-vector and its estimate (as used, e.g., in \cite{dorner2017deep, aoudia2019model, song2022benchmarking}).
}
on a training set $\mathcal{D}=\{(\bmb^{(n)},\bmy^{(n)})\}_{n=1}^{N}$, 
where $\bmb=(b_1,\dots,b_K)$ is the vector of bits of interest 
(even in uncoded systems, one is interested in multiple bits, e.g., when using higher-order constellations, multiple OFDM subcarriers, 
or multi-user transmission),
$\bmy\in\setY$ is the channel output, and $n$ is the sample index.
In fact, this strategy appears to be so obvious that it is often not motivated---let alone questioned---at all.

\subsection{Minimizing the BCE Learns the Posterior Marginals}\label{subsec:min_bce_post_marginals}

An ``ML style'' justification is to note that the \emph{expected} BCE between the bit vector $\bmb$ and its estimate $\bmf(\bmy)=(f_1,\dots,f_K)$ 
can be written as $\sum_k H(b_k|\bmy)+\Ex{\bmy}{D(p_{b_k|\bmy}\|f_k)}$, where $H(\cdot|\cdot)$ and $D(\cdot\|\cdot)$ 
are the conditonal and relative entropy. The expected BCE is thus minimized when the estimates $f_k(\bmy)$
equal the true posterior marginals $p_{b_k|\bmy}$.\footnote{This assumes that the transmitter is not trainable, 
so that $H(\bmb|\bmy)$ is a constant. See \cite{stark2019joint} for a discussion that includes trainable transmitters.}
Once the posterior is learned, simple thresholding (at $\frac12$) results in BER-optimal data detection. 
The \emph{expected} BCE is not available, but resorting to an empirical proxy through stochastic gradient descent 
is so common by now that it is often not even mentioned anymore. 

We now argue explicitly---using the framework of empirical risk minimization (ERM)---that minimizing the \emph{empirical} 
(as opposed to the expected) BCE can learn the true posterior marginals.
We do not claim that this result is ``novel,'' 
but an explicit derivation seems unavailable in the literature. 
In the ERM framework, one learns a function
\begin{align}
\textstyle
  \hat \bmf = \argmin_{\bmf\in\mathcal{F}} L(\bmf,\mathcal{D}), \label{eq:erm}
\end{align}
where
$\mathcal{F} \subseteq \{\bmf:\setY\to [0,\!1]^K\}$ is the set of admissible functions $\bmf=(f_1,\dots,f_K)$ and 
\begin{align}
\textstyle
  L(\bmf,\mathcal{D})= \sum_{n=1,\dots,N}  l_{\text{BCE}}(\bmb^{(n)},\bmf(\bmy^{(n)})),\label{eq:bce}
\end{align}
is the empirical risk, which here is induced by the BCE loss
\begin{align}
\textstyle
   \!\!l_{\text{BCE}}(\bmb,\bmf) = - \sum_{k=1}^K b_k \log(f_k) + (1\!-\!b_k) \log(1-f_k).\!\!\!
   \label{eq:lbce}
\end{align}
In principle, the empirical risk would be minimal if
\begin{align}
  \bmf(\bmy^{(n)})=\bmb^{(n)}, \quad n=1,\ldots,N.
  \label{eq:interpolate}
\end{align}
The optimal $\bmf$ would therefore make hard decisions on the training data set that---with hindsight---are 
always right. However, there are \textit{a priori} no restrictions on how such a function~$\bmf$ responds to an 
input $\bmy$ that is not contained in $\setD$: 
We are at the danger of overfitting.
ERM with a BCE loss may therefore be a reasonable strategy primarily in one of the following two settings: 
Either $\setF$ is ``inflexible'' or the range $\setY\ni\bmy$ is ``small'' compared to $\setD$.
In either case, \eqref{eq:interpolate} cannot be satisfied and overfitting is prevented.\footnote{It has 
been argued that learned systems may also generalize to new inputs \textit{even when} they achieve perfect
accuracy on the training dataset \cite{wyner2017explaining, belkin2018overfitting}.
An investigation of such settings is, however, beyond the scope of this paper.}
The first case is more relevant in practice but more difficult to analyze. 
We therefore focus on the second case, which we formalize through the following~assumption:
\begin{ass}
We assume that $\setD$ is large and representative of the underlying
posterior marginals $p_{b_k|\bmy}$ in the sense that, for some $0<\varepsilon<1$ and for all $k$ and all $(b,\bmy)\in \{0,1\}\times\setY$,
\begin{align}
\textstyle	\left|  p_{b_k|\bmy}(b=1|\bmy) -
	\frac{1}{|\setN(\bmy)|} \sum_{n\in\setN(\bmy)} b_k^{(n)}
    \right| \leq \varepsilon, 
\end{align}
where $\setN(\bmy)=\{ n\in\{1,\dots,N\} : \bmy^{(n)} = \bmy \}$.
\end{ass}

\begin{prop} \label{prop:bce}
Under Ass.~1, ERM with $\setF\!=\!\{\bmf\!:\!\setY\!\to\![0,1]^K\}$ and BCE loss
learns the posterior marginals up to~precision $\varepsilon$,
\begin{align}
	| p_{b_k|\bmy}(b=1|\bmy) - \hat f_k(\bmy)  | \leq \varepsilon, 
	 ~\forall \bmy\in\setY, ~k=1,\dots,K.
\end{align}
\end{prop}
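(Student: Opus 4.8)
The plan is to exploit that $\setF$ is completely unrestricted, so that each scalar value $f_k(\bmy)$ is a free parameter in $[0,1]$ and the empirical risk $L(\bmf,\mathcal{D})$ decouples across the pairs $(k,\bmy)$. First I would partition the sample indices $\{1,\dots,N\}$ by the value of the channel output, i.e., by the sets $\setN(\bmy)$, and regroup the sum as
\begin{align*}
  L(\bmf,\mathcal{D}) = \sum_{\bmy\in\setY}\sum_{k=1}^K \sum_{n\in\setN(\bmy)} \bigl[ -b_k^{(n)}\log f_k(\bmy) - (1-b_k^{(n)})\log(1-f_k(\bmy)) \bigr].
\end{align*}
Since $\setF$ imposes no coupling between distinct values $f_k(\bmy)$, minimizing this sum is equivalent to minimizing each innermost bracketed sum independently over the single scalar $f_k(\bmy)\in[0,1]$. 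This step tacitly uses that $\setN(\bmy)\neq\emptyset$ for every $\bmy\in\setY$, which is what ``large and representative'' in Ass.~1 must be taken to mean; otherwise $\hat f_k(\bmy)$ is unconstrained on unseen outputs and the empirical average in Ass.~1 is undefined.

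Next, for a fixed pair $(k,\bmy)$ I would set $m=|\setN(\bmy)|$ and $s=\sum_{n\in\setN(\bmy)} b_k^{(n)}$, so that the scalar sub-objective becomes $g(f) = -s\log f - (m-s)\log(1-f)$. The crux is to show that $g$ is strictly convex on $(0,1)$ and to locate its minimizer: solving the first-order condition $g'(f) = -s/f + (m-s)/(1-f) = 0$ gives $f = s/m$, and strict convexity, which follows from $g''(f) = s/f^2 + (m-s)/(1-f)^2 > 0$, guarantees that this stationary point is the unique global minimizer. The boundary cases $s=0$ and $s=m$ need separate handling, since then $g$ is monotone and is minimized at $f=0$ resp. $f=1$; both agree with the formula $f=s/m$.

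I would then conclude that
\begin{align*}
  \hat f_k(\bmy) = \frac{s}{m} = \frac{1}{|\setN(\bmy)|}\sum_{n\in\setN(\bmy)} b_k^{(n)},
\end{align*}
i.e., the ERM minimizer is exactly the empirical frequency of the $k$th bit among those samples whose output equals $\bmy$. Substituting this identity into the bound of Ass.~1 immediately yields $|p_{b_k|\bmy}(b=1|\bmy) - \hat f_k(\bmy)| \leq \varepsilon$ for all $\bmy\in\setY$ and all $k$, which is the claim.

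I do not anticipate a serious obstacle, as the argument is essentially the standard fact that the cross-entropy is minimized by the empirical label distribution. The two points that require care are (i) justifying the decoupling cleanly---it hinges on $\setF$ being the full, unconstrained function class and on $\setY$ being discrete enough that the grouping by $\setN(\bmy)$ is well defined and partitions the samples---and (ii) the boundary cases of the scalar minimization, where the first-order condition does not directly apply and one must instead argue by monotonicity of $g$.
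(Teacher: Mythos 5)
Your proposal is correct and follows essentially the same route as the paper's proof: regroup the empirical risk by the sets $\setN(\bmy)$, minimize each decoupled scalar term (the paper states the minimizer of $h_q(x)=-q\log x-(1-q)\log(1-x)$ is $x=q$ where you derive it via the first-order condition and convexity), and invoke Ass.~1. Your added care about the boundary cases $s\in\{0,m\}$ and the implicit requirement that $\setN(\bmy)\neq\emptyset$ for every $\bmy\in\setY$ is a worthwhile tightening of details the paper leaves tacit, but it does not change the argument.
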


\begin{showicassp}
The proof of this proposition (as well as of all following propositions) is included in the arXiv 
version \cite[Sec.\,7.1]{wiesmayr2022arxiv}.
\end{showicassp}
\begin{showarxiv}
The proof of this proposition (as well as of all following propositions) is shown in \fref{sec:proofs}.
\end{showarxiv}

It should be interesting to translate this result to the case where $\mathcal{Y}$ is uncountable 
but $\mathcal{F}$ is ``inflexible,'' or even to the interpolating case described in \cite{wyner2017explaining}.
We also note that, while the BCE is the most natural and probably most widely used loss 
in this context, it is by no means the only option. In fact, an analogous version of Prop.~\ref{prop:bce} holds 
for the mean square error (MSE) loss $l_\text{MSE}:\{0,1\}^K\times [0,1]^K, (\bmb,\bmf) \mapsto \|\bmb-\bmf\|_2^2/K$.

\begin{prop}\label{prop:mse}
Under Ass.~1, ERM with $\setF\!=\!\{\bmf\!:\!\setY\!\to\![0,1]^K\}$ and MSE loss
learns the posterior marginals up to~precision $\varepsilon$,
\begin{align}
	| p_{b_k|\bmy}(b=1|\bmy) - \hat f_k(\bmy) | \leq \varepsilon, ~\forall \bmy\in\setY, 
	~k=1,\dots,K.
\end{align}
\end{prop}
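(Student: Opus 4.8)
The plan is to mirror the pointwise-minimization argument behind Prop.~\ref{prop:bce}, exploiting that $\setF$ is the \emph{unrestricted} set of all functions $\setY\to[0,1]^K$. Because the hypothesis class imposes no coupling between the values $\bmf(\bmy)$ for distinct inputs $\bmy$, and because the MSE loss is separable across the $K$ coordinates, the empirical risk $L(\bmf,\setD)=\frac1K\sum_{n=1}^N\|\bmb^{(n)}-\bmf(\bmy^{(n)})\|_2^2$ can be minimized independently for each input $\bmy\in\setY$ and each coordinate $k$. Concretely, the total contribution of the samples landing on a fixed $\bmy$ to the $k$-th coordinate is $\sum_{n\in\setN(\bmy)}\big(b_k^{(n)}-f_k(\bmy)\big)^2$, so the ERM problem decouples into one scalar least-squares problem per pair $(\bmy,k)$.

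First I would solve each scalar problem $\min_{t\in[0,1]}\sum_{n\in\setN(\bmy)}(b_k^{(n)}-t)^2$. This is a strictly convex quadratic in $t$ (since $|\setN(\bmy)|\geq 1$), whose unconstrained minimizer is the empirical mean $\bar t=\frac{1}{|\setN(\bmy)|}\sum_{n\in\setN(\bmy)}b_k^{(n)}$. Since each $b_k^{(n)}\in\{0,1\}$, this mean is a convex combination of the points $\{0,1\}$ and hence automatically lies in $[0,1]$; the box constraint is therefore inactive, and $\hat f_k(\bmy)=\bar t$ is the unique minimizer.

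Finally I would invoke Ass.~1, which states precisely that the empirical mean $\frac{1}{|\setN(\bmy)|}\sum_{n\in\setN(\bmy)}b_k^{(n)}$ approximates the true posterior marginal $p_{b_k|\bmy}(b=1|\bmy)$ to within $\varepsilon$. Substituting $\hat f_k(\bmy)=\frac{1}{|\setN(\bmy)|}\sum_{n\in\setN(\bmy)}b_k^{(n)}$ then yields $|p_{b_k|\bmy}(b=1|\bmy)-\hat f_k(\bmy)|\leq\varepsilon$ for all $\bmy\in\setY$ and $k=1,\dots,K$, which is the claim.

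There is essentially no hard step here: the content of the proposition is the classical identity that the $\ell_2$ projection of a data cloud onto a constant is its average, instantiated as ``MSE-ERM minimizer $=$ empirical conditional mean.'' The only point requiring a line of care is verifying that the constrained and unconstrained least-squares minimizers coincide, i.e.\ that the $[0,1]$ box is never binding, which is immediate from $b_k^{(n)}\in\{0,1\}$. In this sense the argument is strictly easier than its BCE counterpart, where one must differentiate the logarithmic loss to identify the optimum; here the minimizer of the quadratic is explicit.
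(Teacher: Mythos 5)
Your proposal is correct and follows essentially the same route as the paper's proof: decompose the empirical MSE risk over $(\bmy,k)$ pairs, minimize each convex scalar quadratic to obtain the empirical mean $\frac{1}{|\setN(\bmy)|}\sum_{n\in\setN(\bmy)}b_k^{(n)}$, and conclude via Ass.~1. Your added observation that the $[0,1]$ box constraint is never active is a small but welcome point of care that the paper leaves implicit.
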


\subsection{Posterior vs. Posterior Marginals}\label{subsec:posterior_vs_marginals}
We now draw attention to a subtle but conceptually important point:
The loss in \eqref{eq:lbce} considers the sum of empirical BCEs between 
the individual components of $\bmb$ and $\bmf$, and we have shown that
this loss can be used to learn the posterior marginals $p_{b_k|\bmy}, k=1,\dots,K$. 
But this is not equivalent to learning the joint posterior $p_{\bmb|\bmy}$, 
since we do not learn the conditional dependencies between the different bits $b_k$. 
As a consequence of the summation of the component BCEs,
$\bmf$ approximates the posterior as a product of \emph{independent}~distributions. 
\begin{showarxiv}%
For an information-theoretic perspective, see also \fref{sec:mi}.
\end{showarxiv}
\begin{showicassp}%
For an information-theoretic perspective, see also \cite[Sec.\,7.2]{wiesmayr2022arxiv}.
\end{showicassp}

\section{Training for Block Error Rate}

\subsection{The Difference Between BER and BLER Optimality}
Learning to minimize the BLER in (block-)coded systems is \textit{not} tantamount with
learning to minimize the BER in those systems.
To see this, consider a (block-)coded system in which the bits $\bmb=(b_1,\dots,b_K)$ are encoded into codewords 
$\bmc=\text{enc}(\bmb)\in\setC$
for reliable data transmission. (In contrast to Sec.\,\ref{sec:ber_training}, we now
look at multiple bits from the \emph{same} data stream.)
Optimal (coded) BER is obtained when we decode on the basis of the posterior 
probabilities $p(b_k|\bmy)$, which---as we have seen---can be learned, e.g., with a BCE loss function:
\begin{align}
\textstyle
	\hat b_k = \argmax_{b_k\in\{0,1\}} p_{b_k|\bmy}(b_k|\bmy), ~~ k=1,\dots,K. \label{eq:ber_rule}
\end{align}
Perhaps surprisingly, this need not coincide with BLER-optimal decoding,
which is achieved by the decoding rule
\begin{align}
\textstyle
	\hat \bmb = \text{dec}(\argmax_{\bmc\in\setC} ~p_{\bmc|\bmy}(\bmc|\bmy)), \label{eq:bler_rule}
\end{align}
where $\text{dec}\!=\!\text{enc}^{-1}$ is the inverse mapping of the encoder. 
The reason is as follows: 
Even though the data bits $\bmb$ may be independent \textit{a priori}, 
their conditional distribution given the channel output, $p_{\bmb|\bmy}(\bmb|\bmy)$, 
is in general no longer so, 
$p_{\bmb|\bmy}(\bmb|\bmy)\neq \prod_{k=1,\dots,K} p_{b_k|\bmy}(b_k|\bmy)$.
We have the \mbox{following~result:}

\begin{prop}\label{prop:ber_opt_not_bler_opt}
Bit error rate (BER) optimal decoding in (block-) coded communication systems need not coincide with
block error rate (BLER) optimal decoding. 	
\end{prop}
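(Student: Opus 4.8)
The plan is to prove the proposition by exhibiting a single explicit counterexample, since the claim is purely existential: it suffices to display one (block-)coded system and one channel output $\bmy$ for which the BER-optimal rule \eqref{eq:ber_rule} and the BLER-optimal rule \eqref{eq:bler_rule} return different bit vectors. Conceptually, \eqref{eq:ber_rule} is a \emph{bitwise} (marginal) MAP rule that maximizes each posterior marginal $p_{b_k|\bmy}$ separately, whereas \eqref{eq:bler_rule} is a \emph{blockwise} (joint) MAP rule that maximizes the joint posterior $p_{\bmc|\bmy}$. As already noted in the text, coding makes $p_{\bmb|\bmy}$ correlated across bits even when the $b_k$ are a priori independent, and for a correlated posterior the coordinate-wise maximizers need not assemble into the joint maximizer. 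The counterexample will simply make this gap concrete.

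For the construction I would use the single parity-check code $\text{enc}(b_1,b_2)=(b_1,b_2,b_1\oplus b_2)$, whose codewords are $\setC=\{000,011,101,110\}$ and whose decoder $\text{dec}$ strips the parity bit. Take a uniform prior over the four messages together with a channel and an output $\bmy$ for which the induced posterior over the codewords (equivalently, over the message pairs $00,01,10,11$) equals
\begin{align}
p_{\bmc|\bmy} = (0.4,\,0.05,\,0.35,\,0.2). \nonumber
\end{align}
Such a posterior is realizable by a valid channel (for instance, under the uniform prior one may take the transition probabilities $p(\bmy|\bmc)$ for the observed output proportional to these four numbers), and analogous correlated posteriors arise from this code over an AWGN channel, so the example is not pathological.

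It then remains to evaluate both rules. The blockwise rule \eqref{eq:bler_rule} selects the unique maximizer $\bmc=000$, hence $\hat\bmb=\text{dec}(000)=00$. The bitwise rule \eqref{eq:ber_rule} instead uses the marginals $p_{b_1|\bmy}(1|\bmy)=0.35+0.2=0.55>\tfrac12$ and $p_{b_2|\bmy}(1|\bmy)=0.05+0.2=0.25<\tfrac12$, giving $\hat\bmb=10\neq 00$. The two decoders therefore disagree, which proves the claim. I expect the only real subtlety to be ensuring a \emph{strict}, tie-free separation so that both rules are unambiguous and genuinely distinct; choosing the posterior directly (or, equivalently, using soft AWGN observations rather than the tie-prone nearest-codeword structure of a BSC on this highly symmetric code) removes this difficulty. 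It is also worth using a code with redundancy, as above, rather than a trivial identity encoding, so that the decisive posterior correlation is intrinsic to the coding, which is exactly the conceptual point the proposition is meant to illustrate.
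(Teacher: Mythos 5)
Your proposal is correct and takes essentially the same approach as the paper: a two-bit counterexample in which the joint posterior over the four codewords (using the very same probability values $0.4,0.35,0.2,0.05$, merely assigned to different messages) has an argmax that disagrees with the coordinate-wise marginal argmaxes. The only differences are cosmetic refinements---you instantiate a concrete single parity-check encoder and remark on the realizability of the posterior, whereas the paper leaves the encoder abstract and simply posits the posterior.
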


Since the BCE and MSE loss learn the posterior~marginals instead of the joint posterior, 
they are inherently~aimed at solving the BER-optimal 
decoding problem \eqref{eq:ber_rule}, but not the BLER-optimal 
problem \eqref{eq:bler_rule} which is relevant \mbox{in practice.}

\subsection{Loss Functions for Block Error Rate Optimization}
\label{sec:bler_metrics}
We now propose several loss functions that aim at minimizing the BLER directly 
by penalizing the estimated bits $\bmb$ of a block in joint fashion 
instead of individually (by summation).

We use logits (often referred to as LLRs)
$\ell_k=\ell_k(\bmy)$ to represent the confidence that the $k$th bit $b_k$ is one or zero, 
respectively, where $\ell_k=+\infty$ means complete certainty that $b_k=1$, $\ell_k=-\infty$ means 
complete certainty that $b_k=0$, and $\ell_k=0$ means complete uncertainty. 
A straightforwardly obtained cost function that promotes joint (instead of individual) correctness of the decoded bits of a block is the \emph{product loss}
\begin{align}
\textstyle
    l_{\Pi}(\bmb,\bell) 
    = 1 - \!\prod_{k=1}^K\! \sigma((2b_k-1)\ell_k), 
\end{align}
where $\bmb=(b_1,\dots,b_K)$ are the labels,
$\bell(\bmy)=(\ell_1,\dots,\ell_K)$ the predictions, 
and $\sigma(x)=1/(1+e^{-x})$ is the logistic sigmoid.
The product loss is differentiable and satisfies
$0\leq l_{\Pi} \leq 1$, where $0$ is approached iff $(2b_k-1)\ell_k\to \infty$ 
for all $k$, and where $1$ is approached iff $(2b_k-1)\ell_k\to -\infty$ for at least one~$k$.
A practical issue with the product loss is that, since it is bounded, it can lead to the problem of vanishing gradients. 

A loss that promotes joint instead of individual correctness of the bits in a block 
while not being bounded is the \emph{Max loss} 
\begin{equation}
    l_{\max}(\bmb,\bell) = \max(x_1,\dots, x_K)
\end{equation}
with $x_k=l_{\mathrm{BCE}}(b_k,\rho(\ell_k))$, where $\rho(\cdot)$ maps from logits to probabilities. 
However, the max loss has the undesirable property that, for any given $(\bell,\bmb)$, only one of the partial derivatives 
with respect to $\ell_k$ is nonzero. 
We therefore also propose the usage of well-known smooth approximations to the Max loss. 
Among these are the \emph{SmoothMax loss} with parameter $\alpha$
(which we set to $\frac{1}{2}$ in our experiments)
\begin{equation}\textstyle
    \!\!\!l_{\mathrm{SM}}(\bmb,\bell; \alpha) =\! \sum_{k=1}^K x_k \exp(\alpha x_k)\big/\!\big({\sum_{k=1}^K \exp(\alpha x_k)}\!\big),\!\!
\end{equation}
the  \emph{LogSumExp loss} (normalized with $\gamma=K-1$) 
\begin{equation}
\textstyle
    l_{\mathrm{LSE}}(\bmb,\bell; \gamma) =  \log\left(\sum_{k=1}^K\exp(x_k)- \gamma \right),
\end{equation}
and the \emph{$p$-norm loss} for $p\geq1$ (with regularizer $\gamma=10^{-8}\!>\!0$) 
\begin{equation}
    l_{p}(\bmb,\bell; \gamma) = \left(x_1^p + \dots + x_K^p + \gamma \right)^{\frac1p}.
\end{equation}

\begin{remark}
A popular loss for learning end-to-end communication systems is the categorical CE (CCE) between the 
transmitted and guessed message \cite{dorner2017deep, aoudia2019model, song2022benchmarking}. By identifying
messages with the blocks of a block code, the CCE can be seen as a loss that optimizes the BLER. 
CCE-based learning, however, seems to be feasible only for very short blocks of $K\lessapprox8$~bits. 
\end{remark}

\section{SNR deweighted Training}
\label{sec:typestyle}
ML-assisted communication systems often learn a single set of parameters while operating
over a range of SNRs. To perform well over an entire range, training data should be sampled 
from the targeted SNR range. However, the aggregate loss of the training set will then be dominated 
by low-SNR data samples. Consequently, training will focus on low-SNR performance, because 
a small relative improvement at low SNR will affect the cost much more than a large relative improvement at high SNR. 
\fref{fig:normalized_loss_functions_vs_snr} showcases the issue by visualizing the average loss
when using an LDPC code with a classical BP decoder over an AWGN channel
for the different loss functions as a function of SNR (normalized such that the average loss at $0$\,dB is $1$). 
Evidently, the loss depends strongly on the SNR. In fact, the average losses closely mirror the bit/block error~rates.\footnote{
Note that the BER curve is shaped more like the BER losses (BCE/MSE), whereas the BLER curve is shaped 
more like the BLER losses.
This supports the insight that BCE or MSE do not optimally target the BLER.}
\begin{figure}
\begin{minipage}[b]{.49\linewidth}
  \centering
  \centerline{\includegraphics[width=4.1cm]{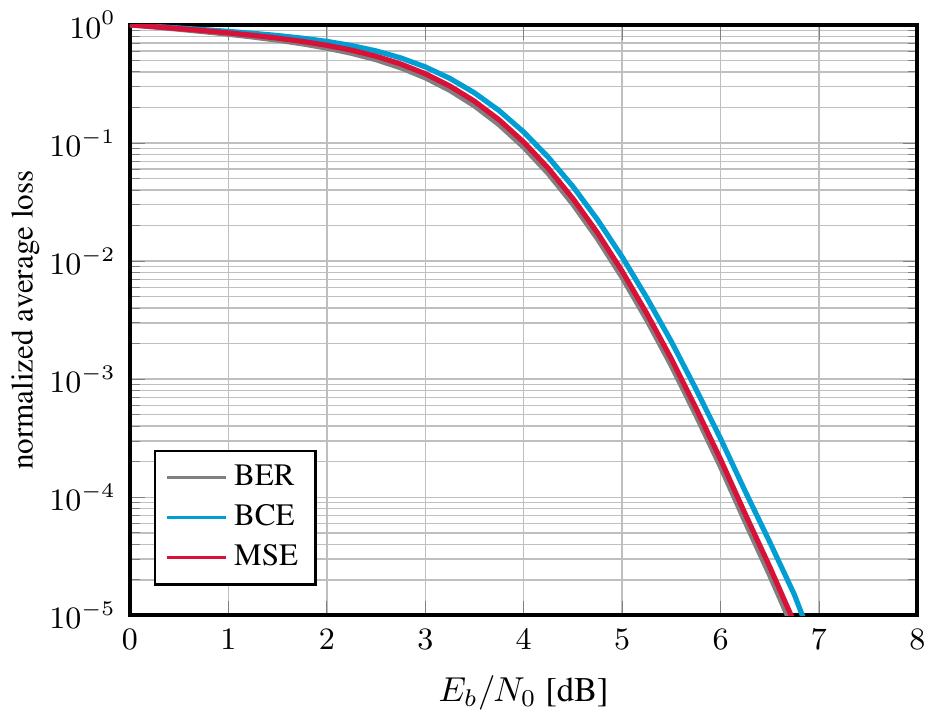}}
 \vspace{-0.2cm}
\end{minipage}
\begin{minipage}[b]{.49\linewidth}
  \centering
  \centerline{\includegraphics[width=4.1cm]{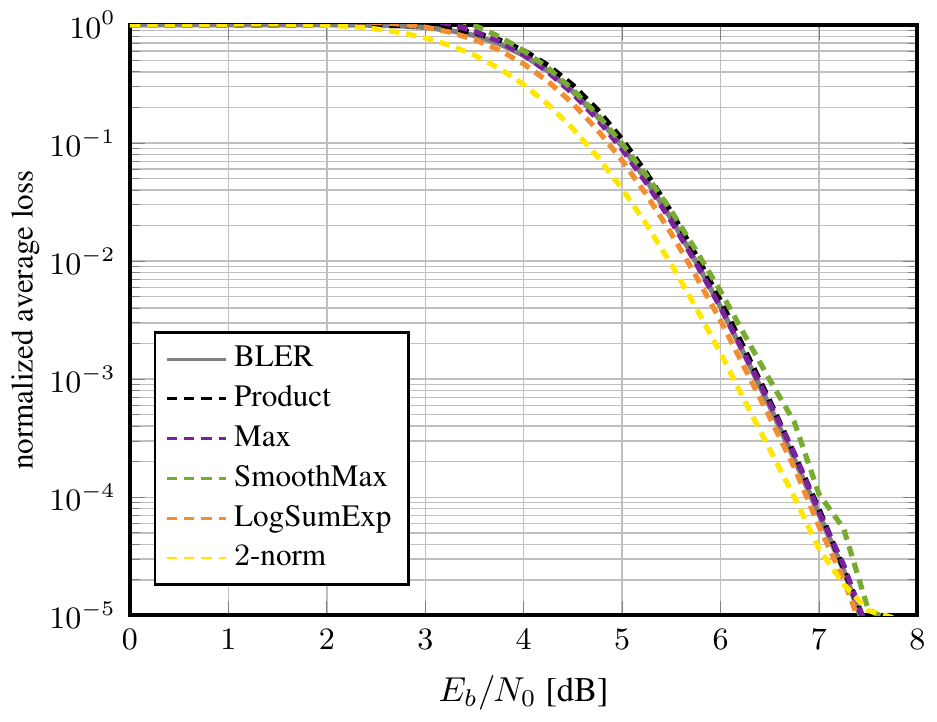}}
 \vspace{-0.2cm}
\end{minipage}
\hfill
\caption{The average loss of different BER (left) and BLER (right) losses is just as SNR-dependent as BER and BLER.}
\label{fig:normalized_loss_functions_vs_snr}
\end{figure}

To compensate for this effect, we propose \emph{SNR deweighted training}:
Training consists of multiple epochs with $M$ batches per epoch and $N$ Monte-Carlo (MC) samples 
per batch.
We partition every batch $\{1,\dots,N\}$ into $J\ll N$ sets $\setN(j),j=1,\dots,J$, each of which we associate with an SNR value that is selected from a uniform (in dB) grid which covers the desired range of operation.
The loss of the $m$th batch is defined as
\begin{align}
\textstyle
    L_m = \frac{1}{N}\sum_{j=1}^J\sum_{n\in\setN(j)} w^{(j)}l_m^{(n)}, \quad m=1,\dots,M,
\end{align}
where $l^{(n)}_m=l(\bmb^{(n)}_m,\bell^{(n)}_m)$ is the loss of the $n$th MC \mbox{sample} in the $m$th batch. 
The weights $w^{(j)}$ are initialized to $1$ and updated after every epoch: 
To balance the loss over the SNR range, we accumulate the loss over all samples with the same SNR, 
$l^{(j)}_\text{cum} = \sum_{n\in\setN(j)}\ \sum_{m=1}^M l_m^{(n)}.$
The weights for the next epoch are set to the inverse
cumulative losses, plus a constant $\delta\!>\!0$ that bounds the weight for stability:
$\tilde{w}^{(j)} = (l^{(j)}_\text{cum} + \delta)^{-1}.$
To avoid global loss scaling, we normalize the weights by dividing by the weight at the 
grid center, $\tilde{w}^{(\floor{J/2})}$, i.e.,
$w^{(j)} = \tilde{w}^{(j)}/\tilde{w}^{(\floor{J/2})}$,
before continuing~training.

Alternatively, one might also perform SNR deweighting by using the loss of a fixed baseline (e.g., a classical 
communication system) to deweight the training samples, instead of using the adaptive reweighting strategy 
described here.

\section{Simulation Results}
\label{sec:simulation_results}
We evaluate the utility of the different losses and of SNR~de-weighting 
through simulations in NVIDIA Sionna v0.12.1~\cite{Hoydis2022}.\begin{showicassp}\footnote{A second experiment on a simple trainable LDPC decoder for a single-input single-output (SISO) complex AWGN channel 
is included in the supplementary material of the extended arXiv version \cite{wiesmayr2022arxiv}.}
\end{showicassp}
\begin{showarxiv}\footnote{A second experiment on a simple trainable LDPC decoder for a single-input single-output 
(SISO) complex AWGN channel 
is included in Sec. \ref{ssec:siso_awgn}.}
\end{showarxiv}
We consider
a novel deep unfolded interleaved detection and decoding (DUIDD) receiver \cite{Wiesmayr2022} for a 5G MIMO-OFDM wireless system with 4 single-antenna UEs and one 16-antenna base station.
We use a short rate-matched $(80,60)$ 5G LDPC code based on a $(520,100)$ code with lifted base graph (BG)~$2$.
The coded bit stream is mapped to QPSK symbols, which are transmitted over a 3GPP UMa line-of-sight wireless channel. The channel is estimated using pilots and a least-square estimator with linear interpolation across frequency and time. 

In the first experiment, we consider the difference between BER and BLER performance with different losses
for training (\fref{fig:duidd_mimo_snr_range}), as well as with an untrained
``classical'' receiver \cite{Wiesmayr2022}.
We learn a single parameter set by training over a $[-10,10]$\,dB interval (without SNR deweighting). 
We start by pre-training the receiver for $2500$ batches of $N\!=\!200$ MC samples with~the BCE (or MSE) loss.
We then fine-tune the receiver by training with the respective loss functions for another $2500$ batches. 
Because we do not use SNR deweighting, the low-SNR region dominates training. 
The results show that the BER losses (BCE and MSE, solid) have the best BER-performance in the dominant low-SNR region (\circled{1}), 
but that the BLER losses (dashed) have superior BLER-performance at low-SNR (\circled{3}). 
Somewhat surprisingly, we observe that in the high-SNR regime---which is neglected during training, since 
we do not use SNR deweighting---the BLER losses outperform the BER losses in terms of BER (\circled{2}) as well as BLER (\circled{4}). 
The improvement in BLER-performance of the best BLER loss compared to the best BER loss 
is $0.62$\,dB at a BLER of $1$\% (\circled{4}).

In a second experiment, we select the product loss to consider the impact of different SNR training methods (\fref{fig:duidd_product}).
In the left figure, we compare na\"ive training over a large SNR range
of $[-10,10]$\,dB (R$_{[-10,10]\text{dB}}$) with SNR deweighted training over that same range
(DW$_{[-10,10]\text{dB}}$), as well as with training at a single SNR point at $-5$\,dB (P$_{-5\text{dB}}$)
.\footnote{For pre-training, we applied BCE loss and trained on the same SNR range or point as in the latter refinement step, respectively, but without deweighting.}
The results show that na\"ive training over the range, as well as training only at
a single low-SNR point achieves good relative performance at low SNR (\circled{5})
but comparably bad performance at high SNR (\circled{7}). In fact, training only 
at a low-SNR point leads to a complete breakdown at very high SNRs (in this experiment).
Finally, SNR deweighted training achieves well-balanced performance 
even when training over such a large SNR range. 
SNR deweighting outperforms na\"ive training over the range by 0.54\,dB at a BLER of 1\%  (\circled{7}).

In the right figure of \fref{fig:duidd_product}, we perform the same experiment, but training
only over a smaller range of SNRs ($[-4,6]$~dB) in the BLER waterfall region (or a single point therein). 
The results show significant convergence between the different training methods 
in this case.
Training on a single SNR point (P$_{5\text{dB}}$) achieves slightly better performance than its competitors at high SNR (\circled{9}), but performs worse at low SNR (\circled{8}).
SNR deweighted training still enjoys a (tiny) advantage over na\"ive training over the SNR range at high SNR (\circled{9}), while na\"ive training over the SNR range enjoys an (even tinier) advantage at the low-SNR end of the waterfall (\circled{8}).
These results highlight that SNR plays an important role in training:
Na\"ive training over a range focuses excessively on~low SNRs.
Training at a single SNR sometimes works well, but sometimes leads to bad surprises. 
In contrast, SNR deweighted training seems to be robust and provide uniformly good performance. 

The gains that are afforded in these experiments by BLER specific losses and by SNR deweighting 
($0.62$\,dB and $0.54$\,dB, respectively) may be modest. 
However, we emphasize that these gains are not caused by a more elaborate receiver or more training data, 
but simply by using a more appropriate loss function. As such, they are effectively available \emph{for free}.

\begin{figure}[t]\hspace{-3.5mm}
\begin{minipage}[b]{.54\linewidth}
  \centering
  \centerline{\includegraphics[width=4.1cm]{./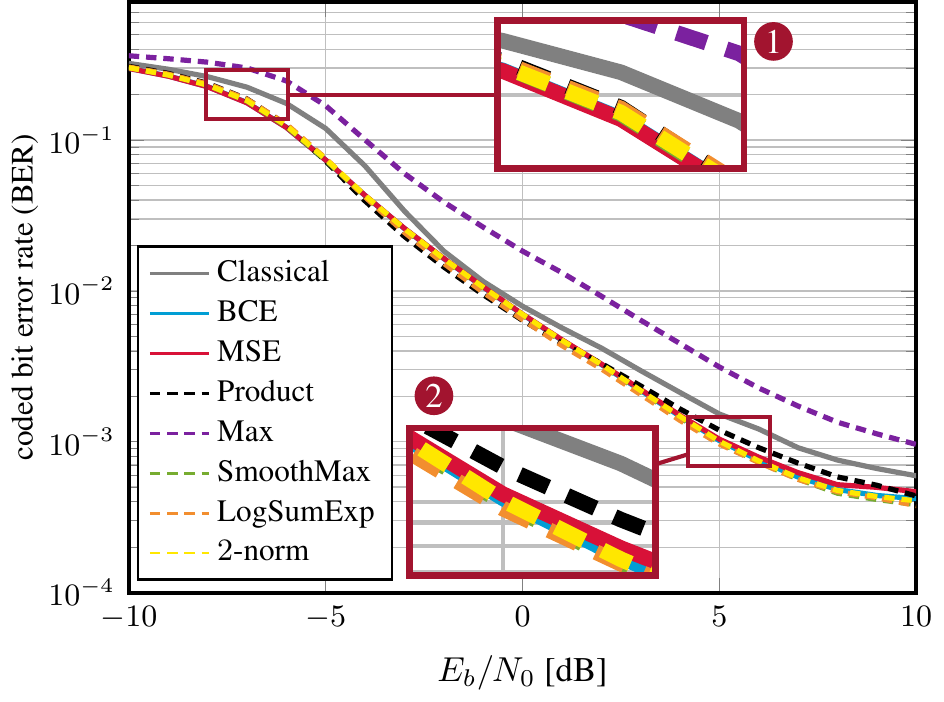}} 
 \vspace{-0.3cm}
\end{minipage}
\hspace{1mm}
\begin{minipage}[b]{.44\linewidth}
  \centering
\centerline{\includegraphics[width=4.1cm]{./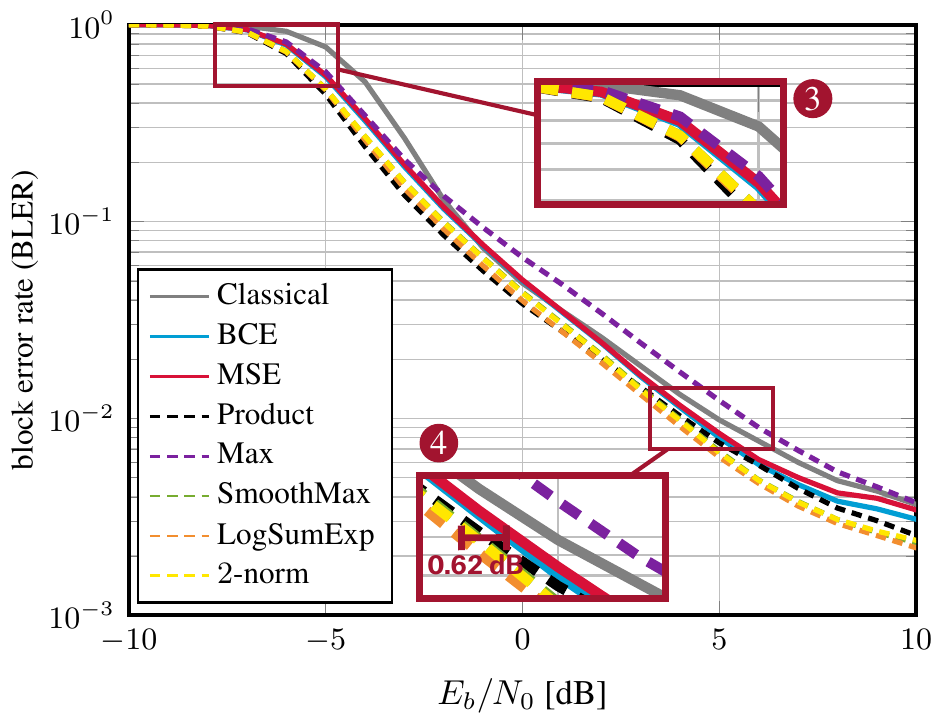}}
 \vspace{-0.3cm}
\end{minipage}
\hfill
\caption{Contrast between BER (left) and BLER (right) when training DUIDD with different losses.} %
\label{fig:duidd_mimo_snr_range}
\end{figure}
\begin{figure}[t]
\begin{minipage}[b]{.54\linewidth}
  \centering
  \centerline{\includegraphics[width=\linewidth]{./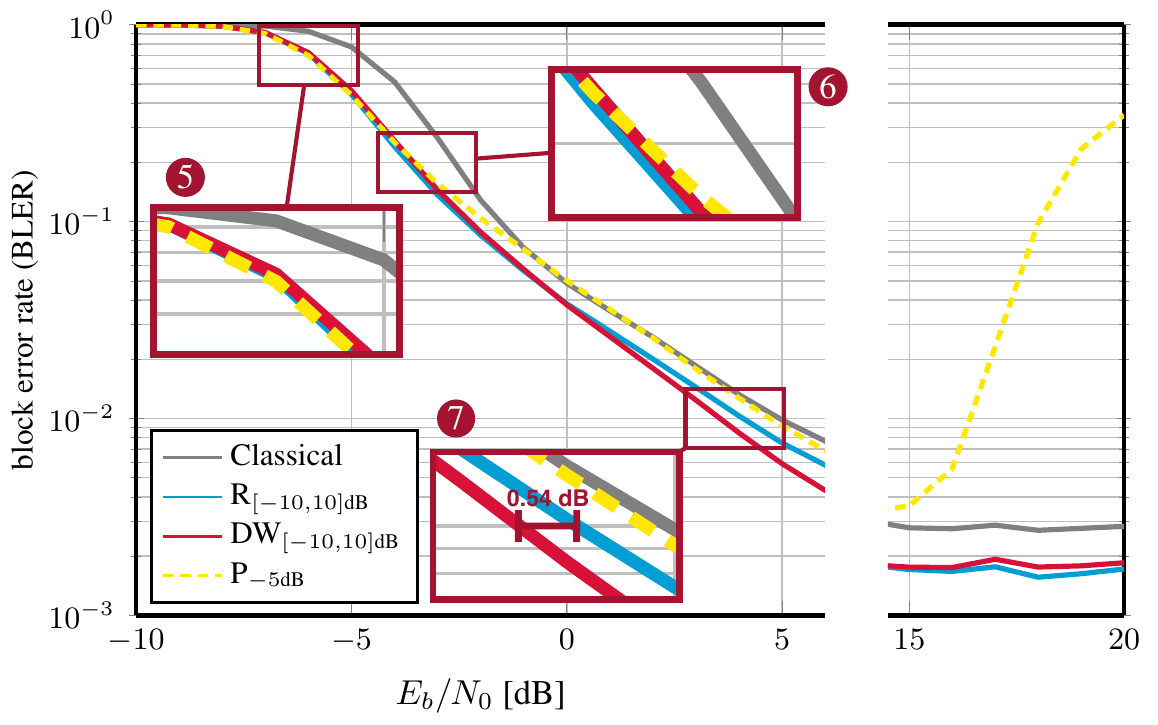}}
 \vspace{-0.3cm}
\end{minipage}
\begin{minipage}[b]{.44\linewidth}
  \centering
  \centerline{\includegraphics[width=\linewidth]{./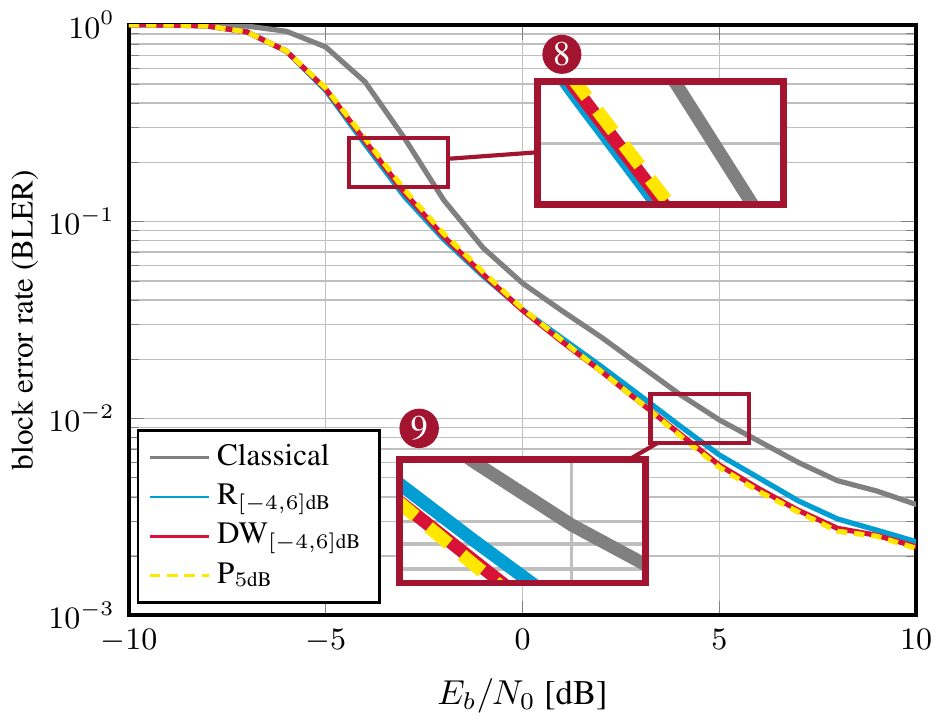}}
 \vspace{-0.3cm}
\end{minipage}
\hfill
\caption{Impact of training at a single SNR vs. over a range of SNRs vs. SNR deweighting, for DUIDD with Product loss.}
\label{fig:duidd_product}
\vspace{-0.3cm}
\end{figure}

\vspace{-0.125cm}
\section{Conclusions}\vspace{-0.125cm}
We have turned the spotlight on the impact that different loss functions and SNRs have on the training of ML-assisted communication
systems. Seemingly obvious losses, such as empirical BCE, turn out to be suboptimal for minimizing the BLER 
and are outperformed by BLER-specific losses. 
We have also shown that na\"ive training over a range of SNRs will focus excessively on the low-SNR (high-loss) region 
and neglect high-SNR performance. To compensate for this effect, we have proposed SNR deweighting.
The findings of this paper are not meant as final answers to the question of \emph{how} to train communication systems, 
but rather as a starting point to some of the relevant issues and considerations. 

\begin{showarxiv}
\section{Supplementary Material} \label{sec:supplement}

The contents of this section supplement a paper that will be presented at the 2023 IEEE International Conference on Acoustics, Speech, and Signal Processing (ICASSP). 

\subsection{Proofs} \label{sec:proofs}
\subsubsection{Proof of \fref{prop:bce}}
We start by rewriting \eqref{eq:bce} as
\begin{align}
  &-  \sum_{k=1}^K \sum_{\bmy\in\mathcal{Y}} \sum_{n\in\setN(\bmy)}
  \Big[ b_k^{(n)} \log(f_k(\bmy)) \nonumber \\
  & \qquad \qquad \qquad \qquad + (1-b_k^{(n)}) \log(1-f_k(\bmy)) \Big]  \\
  &= -  \sum_{k=1}^K \sum_{\bmy\in\mathcal{Y}} \bigg[ \log(f_k(\bmy)) \sum_{n\in\setN(\bmy)} 
    b_k^{(n)} \nonumber\\
    & \qquad \qquad \qquad +
   \log(1-f_k(\bmy)) \sum_{n\in\setN(\bmy)} (1-b_k^{(n)}) \bigg]
   \nonumber \\
  &= - \sum_{k=1}^K \sum_{\bmy\in\mathcal{Y}} |\setN(\bmy)| \Big[ \tilde p_k(\bmy) \log(f_k(\bmy)) \nonumber \\
  & \qquad \qquad \qquad \qquad \quad + (1-\tilde p_k(\bmy))\log(1-f_k(\bmy))  \Big], \label{eq:result}
\end{align}
where in \eqref{eq:result} we defined $\tilde p_k(\bmy) = \frac{1}{|\setN(\bmy)|}\! \sum_{n \in \setN(\bmy)} \!b_k^{(n)}\!$,
which satisfies $|\tilde p_k(\bmy) - p_{b_k|\bmy}(b\!=\!1|\bmy)| \leq \varepsilon$.
To analyze the minimization of \eqref{eq:result}, we consider the terms in $k$ and $\bmy$ individually.
The function 
\begin{align}
  h_q: [0,1]\to \reals,~ x \mapsto -q \log(x) - (1-q) \log(1-x)
\end{align}
parametrized by $q\in[0,1]$ is minimized for $x=q$. By term-wise minimization of \eqref{eq:result}, we
see that the $\hat \bmf$ which minimizes \eqref{eq:result} (and thus also \eqref{eq:erm}) satisfies
$\hat f_k(\bmy) = \tilde p_k(\bmy)$ for all $\bmy\in\setY$
and thus gives the true posterior marginals up to precision $\varepsilon$.~\hfill $\blacksquare$

\subsubsection{Proof of \fref{prop:mse}}
The proof is very similar to the one of \fref{prop:bce}. We start by rewriting the aggregate MSE loss as
\begin{align}
    &- \frac1K \sum_{k=1}^K \sum_{\bmy\in\mathcal{Y}} \sum_{n\in\setN(\bmy)}
    \left( b_k^{(n)} - f_k(\bmy) \right)^2
\end{align}
and consider the minimization of the terms in $k$ and $\bmy$ individually (while dropping the $\frac1K$ prefactor). 
These terms  $\sum_{n\in\setN(\bmy)}( b_k^{(n)} - f_k(\bmy) )^2$ are convex, so that the minimizing
$f_k(\bmy)$ can be found by finding the zero of the derivative, 
\begin{align}
    \frac{\partial}{\!\partial f_k(\bmy)} \sum_{n\in\setN(\bmy)} \!\! \left( b_k^{(n)} - f_k(\bmy) \right)^2
    &= 2 \sum_{n\in\setN(\bmy)} \!\! \left( b_k^{(n)} - f_k(\bmy) \right) \nonumber \\
    &= 0, 
\end{align}
which gives 
\begin{align}
    f_k(\bmy) &= \frac{1}{|\setN(\bmy)|} \sum_{n \in \setN(\bmy)} b_k^{(n)} \\
    &= p_{b_k|\bmy}(b=1|\bmy) + e_r
\end{align}
for some residual $e_r$ with $|e_r|\leq \varepsilon$.~\hfill~$\blacksquare$
\vspace{-2mm}

\subsubsection{Proof of \fref{prop:ber_opt_not_bler_opt}}
We prove the result with an example: Blocks of two  independent, equiprobable bits $b_1,b_2$ 
are encoded into four codewords: $\bmc_1=\text{enc}((b_1,b_2)=(0,0))$,
\mbox{$\bmc_2 = \text{enc}((b_1,b_2)=(0,1))$,} $\bmc_3 = \text{enc}((b_1,b_2)=(1,0))$, 
and $\bmc_4 = \text{enc}((b_1,b_2)=(1,1))$.
Assume that the channel output $\bmy$ induces the following posterior for the transmitted codeword: 
$p(\bmc_1|\bmy)\!=\!0.2, p(\bmc_2|\bmy)\!=\!0.35, p(\bmc_3|\bmy)\!=\!0.4$, and $p(\bmc_4|\bmy)\!=\!0.05$. 
It directly follows that block error rate optimal decoding according to \eqref{eq:bler_rule}
yields $(\hat b_1, \hat b_2)=(1,0)$, with the posterior probability of this being the transmitted block equal to $0.4$. 
However, the posteriors of the individual bits are
\begin{align}
	p_{b_1|\bmy}(b_1=1|\bmy)&=p_{\bmc|\bmy}(\bmc_3|\bmy) + p_{\bmc|\bmy}(\bmc_4|\bmy) = 0.45 \\
	p_{b_2|\bmy}(b_2=1|\bmy)&=p_{\bmc|\bmy}(\bmc_2|\bmy) + p_{\bmc|\bmy}(\bmc_4|\bmy) = 0.25.	
\end{align}
Bit error rate optimal decoding on the basis of \eqref{eq:ber_rule} thus gives $(\hat b_1, \hat b_2)=(0,0)$, 
with a block error probability of 0.8.
\hfill $\blacksquare$

\subsection{A Word About Mutual Information} 
\label{sec:mi}

In \fref{sec:ber_training}, we have motivated the use of the BCE or MSE loss for ML-assisted communications 
by arguing that it can learn the posterior marginals $p_{b_k|\bmy}$ and is therefore BER-optimal. 
Communication theorists, however, might be more immediately interested in learning functions  $f_k:\mathcal{Y}\to [0,1]$ 
that maximize the mutual information  $I(b_k;f_k(\bmy))$ 
(e.g., when eyeing to the combination with FEC). 
The issue with such an objective is that mutual information is invariant under bijections.  
That is, if $\hat{f}_k$ maximizes $I(b_k;f_k(\bmy))$, then so does $g\circ \hat{f}_k$ for any bijection $g:[0,1]\to[0,1]$. 
What matters in practice is to learn a function that is not only informative about $b_k$, but that also makes this information accessible in a predefined way.\footnote{For example, a soft-input LDPC decoder typically expects logits to perform message passing. Although the mapping $g(f_k(\bmy))=1-f_k(\bmy)$ would lead to the same MI, the decoder would not work with such $g$.}
Maximizing the MI achieves the former, but has no bearing on the latter. 
However, by learning the posterior marginals $p_{b_k|\bmy}$, one can achieve both goals at once. 
In particular, the posterior captures all the information about $b_k$ contained in $\bmy$.
Again, this result is hardly ``new,'' but we feel that it is worth spelling out explicitly.

\begin{prop}
	The binary posterior $\hat f_k: \bmy \mapsto p_{b_k|\bmy}(b=1|\bmy)$ maximizes the mutual information,
	\begin{align}
	\textstyle
		\hat f_k \in \argmax_{f_k:\setY \to [0,1]} I(b_k;f_k(\bmy)),
	\end{align}
and satisfies $I(b_k;\hat f_k(\bmy)) = I(b_k;\bmy)$.
\end{prop}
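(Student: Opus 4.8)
The plan is to prove the claim in two parts: a converse (upper bound) that holds for \emph{every} admissible $f_k$, and a matching achievability showing that the posterior $\hat f_k$ attains that bound. Together these imply both that the maximal value of the objective equals $I(b_k;\bmy)$ and that $\hat f_k$ is a maximizer.

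First I would establish the upper bound via the data-processing inequality. For any $f_k:\setY\to[0,1]$, the quantity $f_k(\bmy)$ is a deterministic function of $\bmy$, so $b_k \markov \bmy \markov f_k(\bmy)$ is a Markov chain and hence $I(b_k;f_k(\bmy)) \leq I(b_k;\bmy)$. This already shows that $I(b_k;\bmy)$ upper-bounds the objective uniformly over all $f_k:\setY\to[0,1]$, which is exactly the set over which the $\argmax$ is taken.

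The second and main step is to show that $\hat f_k$ attains this bound, i.e. $I(b_k;\hat f_k(\bmy)) = I(b_k;\bmy)$. Writing $I(b_k;\cdot) = H(b_k) - H(b_k\mid\cdot)$, it suffices to prove $H(b_k\mid\hat f_k(\bmy)) = H(b_k\mid\bmy)$. Since $b_k$ is binary, $H(b_k\mid\bmy) = \Ex{\bmy}{H_{\mathrm b}(\hat f_k(\bmy))}$, where $H_{\mathrm b}(q) = -q\log q - (1-q)\log(1-q)$ and I used $p_{b_k|\bmy}(1|\bmy)=\hat f_k(\bmy)$ by definition. To compute $H(b_k\mid\hat f_k(\bmy))$ I would argue that $\hat f_k(\bmy)$ is a sufficient statistic by showing that the posterior of $b_k$ given the \emph{value} of the posterior is that value itself: conditioning on $\bmy$ within the event $\{\hat f_k(\bmy)=q\}$ and invoking the tower property,
\[
  p_{b_k \mid \hat f_k}(1\mid q) = \Ex{\bmy}{p_{b_k|\bmy}(1|\bmy) \,\middle|\, \hat f_k(\bmy)=q} = q,
\]
since $p_{b_k|\bmy}(1|\bmy)=q$ on that event. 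Consequently $H(b_k\mid\hat f_k(\bmy)) = \Ex{q}{H_{\mathrm b}(q)} = \Ex{\bmy}{H_{\mathrm b}(\hat f_k(\bmy))} = H(b_k\mid\bmy)$, yielding $I(b_k;\hat f_k(\bmy)) = I(b_k;\bmy)$. Combined with the converse, $\hat f_k$ is a maximizer and the optimal value is $I(b_k;\bmy)$.

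I expect the main obstacle to be making the sufficient-statistic computation fully rigorous when $\setY$, and hence the range of $\hat f_k$, is uncountable: the event $\{\hat f_k(\bmy)=q\}$ may have zero probability, so the elementary conditioning above must be read through regular conditional distributions (or conditional expectations), and the pushforward of $\bmy$ under $\hat f_k$ must be handled with care. The driving idea is unchanged, namely that the binary conditional law of $b_k$ given $\bmy$ depends on $\bmy$ only through the scalar $\hat f_k(\bmy)$, so that conditioning on $\hat f_k(\bmy)$ loses no information; in the discrete, large-sample regime of Ass.~1 the tower-property step is immediate and the null-set subtlety disappears.
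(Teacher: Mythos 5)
Your proof is correct, and it shares the paper's converse step --- the data-processing inequality applied to the chain $b_k \to \bmy \to f_k(\bmy)$ --- but it establishes achievability by a different route. The paper invokes the characterization that equality in the data-processing inequality holds iff $f(\bmy)$ is a sufficient statistic, and then verifies sufficiency of $t(\bmy)=p_{b_k|\bmy}(b=1|\bmy)$ via the Fisher--Neyman factorization theorem (with $r(\bmy)\equiv 1$ and $q(b|t)$ equal to $t$ or $1-t$). You instead compute the equality directly: using that $b_k$ is binary, you show $p_{b_k|\hat f_k}(1\mid q)=\Exop\big[\hat f_k(\bmy)\,\big|\,\hat f_k(\bmy)=q\big]=q$ by the tower property, so that $H(b_k\mid \hat f_k(\bmy))$ and $H(b_k\mid\bmy)$ are both equal to $\Ex{\bmy}{H_{\mathrm b}(\hat f_k(\bmy))}$. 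Your version is more self-contained (no appeal to the sufficiency characterization or the factorization theorem) and makes the attained value $I(b_k;\bmy)$ explicit by computation, at the cost of the measure-theoretic care you correctly flag for uncountable $\setY$ (regular conditional distributions for the zero-probability events $\{\hat f_k(\bmy)=q\}$); the paper's citation-based argument is shorter and sidesteps that bookkeeping by working at the level of the factorization criterion. Both arguments hinge on the same underlying fact, namely that the conditional law of $b_k$ given $\bmy$ depends on $\bmy$ only through the scalar $\hat f_k(\bmy)$.
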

\begin{proof} 
\sloppy
By the data-processing inequality, $I(b_k;\bmy) \geq I(b_k;f(\bmy))$ for any $f$. 
Equality holds iff $f(\bmy)$ is a sufficient statistic relative to $p_{b_k|\bmy}(b_k|\bmy)$ \cite{Cover2006}.
By the Fisher–Neyman factorization theorem \cite{Silvey1975}, $t(\bmy)\triangleq p_{b_k|\bmy}(b=1|\bmy)$ 
is sufficient if there exist functions $q(b|\cdot), r(\cdot)$ such that $p_{b_k|\bmy}$ can be factorized as
\begin{align}
  p_{b_k|\bmy}(b|\bmy) = q\big(b\,|\,t(\bmy)\big)\, r(\bmy), ~\forall (b,\bmy) \in \{0,1\}\times\setY.
\end{align}
This criterion is satisfied for $r(\bmy)\equiv1$, and for $q(b|t)=t$ if $b=1$ and $q(b|t)=1-t$ if $b=0$. 
\fussy
\end{proof}

As a result, the posterior marginals $(p_{b_1|\bmy},\dots,p_{b_K|\bmy})$ also maximize 
the bitwise mutual information (BMI) \cite{Cammerer2020}
\begin{align}
\textstyle
    \sum_{k=1}^K I(b_k; f_k(\bmy))
\end{align}
among all functions $\bmf=(f_1,\dots,f_K)$, since they satisfy the inequality
\begin{align}
\textstyle
    \sum_{k=1}^K I(b_k; f_k(\bmy)) \leq \sum_{k=1}^K I(b_k; \bmy) 
\end{align}
with equality. The BMI was proven to be an achievable rate for bit-metric decoding in \cite{bocherer2017achievable}.
However, this does not imply that the posterior marginals maximize the mutual information between $\bmb$ and~$\bmf(\bmy)$. 
In particular, the posterior marginals $(p_{b_1|\bmy},\dots,p_{b_K|\bmy})$ do in general \emph{not} satisfy the inequality 
\begin{align}
    I(\bmb; \bmf(\bmy)) \leq I(\bmb; \bmy) 
\end{align}
with equality, since they do not capture the conditional dependencies between the individual bits.
These considerations give us an information theoretic perspective on why learning 
the posterior marginals can be ``optimal'' in terms of getting the individual bits right, 
but is not necessarily optimal in terms of getting the entire bit vector right.

\subsection{Simulation Results: SISO AWGN Channel}
\label{ssec:siso_awgn}

To show that the effects observed in \fref{sec:simulation_results} were not simply due to that one particular
communication scenario and/or receiver architecture, this section reports experiments for a second communication 
scenario. We consider a single-input single-output (SISO) circularly symmetric complex additive white Gaussian noise (AWGN) channel 
with a trainable LDPC receiver. This scenario is comparably much simpler than that of \fref{sec:simulation_results}: the channel model is much simpler, and the receiver has much less learnable ``structure.'' 
Both of these factors lead us to expect already in advance that there are much smaller gains
available for learned systems (and much smaller performance differences between different training 
methods) than for the scenario in \fref{sec:simulation_results}. 
What we are primarily interested, then, is whether the relations between the performances of
different training methods
are consistent with those of \fref{sec:simulation_results}, and not whether the observed effects 
are as pronounced.

\begin{figure}[t]
\begin{minipage}[b]{.49\linewidth}
  \centering
  \centerline{\includegraphics[width=4.1cm]{./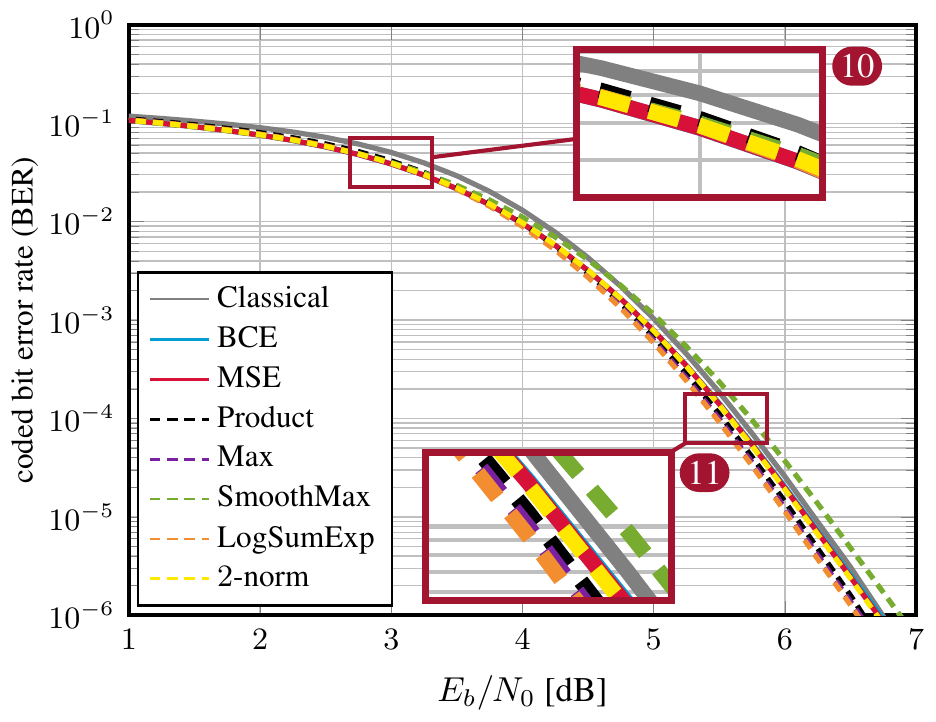}}
 \vspace{-0.25cm}
\end{minipage}
\begin{minipage}[b]{.49\linewidth}
  \centering
  \centerline{\includegraphics[width=4.1cm]{./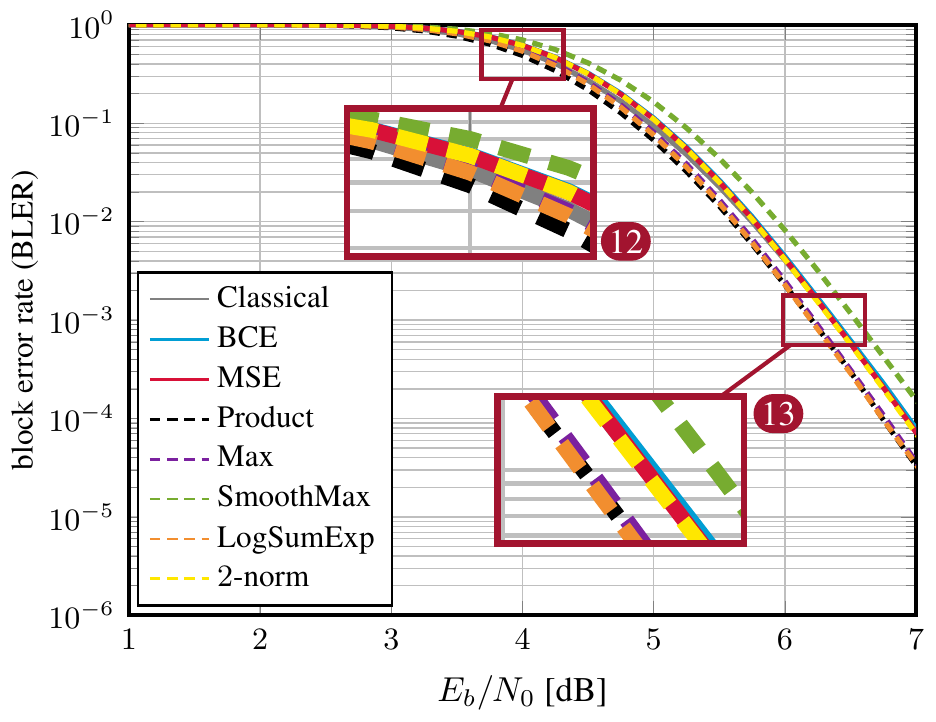}}
 \vspace{-0.25cm}
\end{minipage}
\hfill
\caption{Contrast between BER (left) and BLER (right) when training SISO AWGN receiver with different losses.}
\label{fig:siso_awgn_snr_range}
\end{figure}

\begin{figure}[t]
\begin{minipage}[b]{.49\linewidth}
  \centering
  \centerline{\includegraphics[width=4.1cm]{./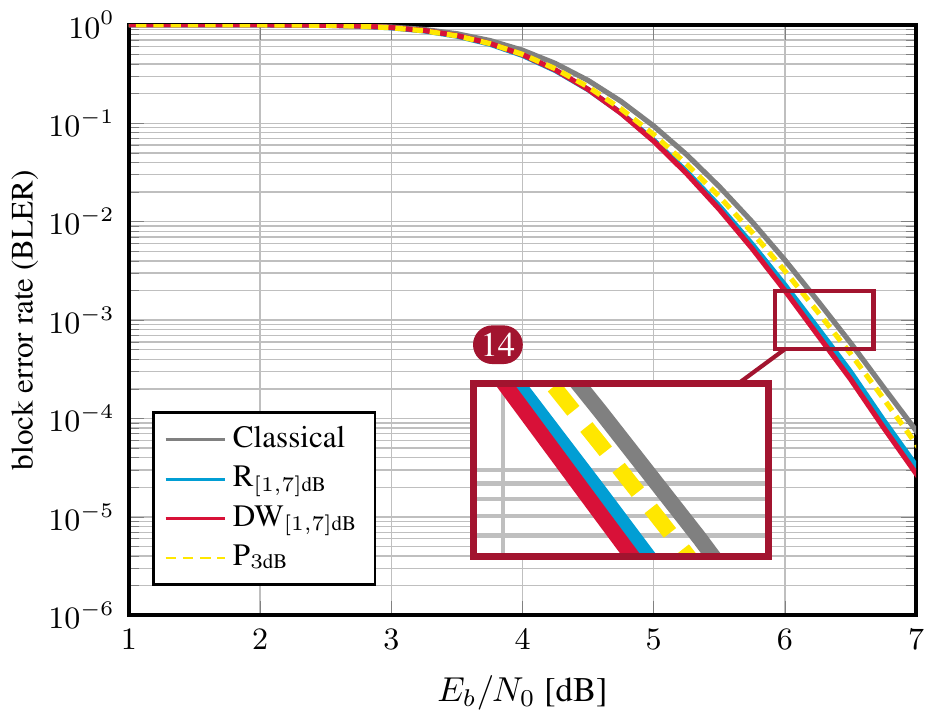}}
 \vspace{-0.25cm}
\end{minipage}
\begin{minipage}[b]{.49\linewidth}
  \centering
  \centerline{\includegraphics[width=4.1cm]{./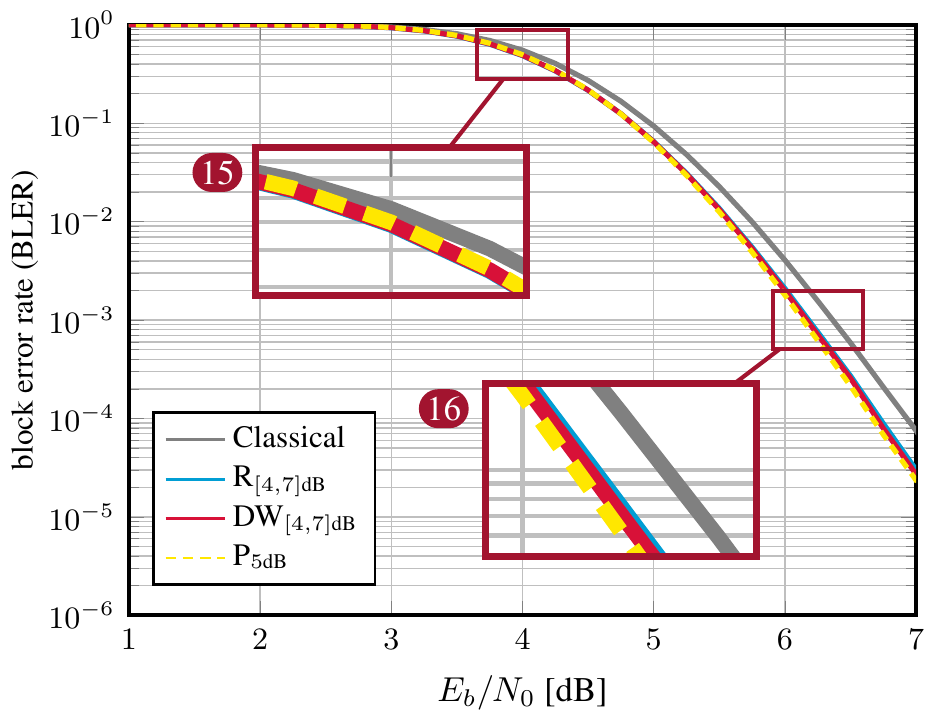}}
 \vspace{-0.25cm}
\end{minipage}
\hfill
\caption{Impact of training at a single SNR vs. over a range of SNRs vs. SNR deweighting, SISO AWGN with Product loss.}
\label{fig:siso_product_training}
\vspace{-0.1cm}
\end{figure}

We use a rate-matched $(360,300)$ 5G LDPC code based on~a $(952,308)$ code with lifted base graph (BG) $1$ \cite{Richardson2018}.
The code bits are interleaved as in \cite[Sec. 5.4.2.2]{ETSI5G} and mapped to QPSK symbols, which are sent over 
an AWGN channel. 
The decoder calculates logits
via max-log demapping and performs $5$ min-sum message passing (MP) iterations.
Using deep unfolding, we train LDPC belief propagation (BP) check-to-variable (C2V) node message damping \cite{Som2010} and the edge weights of the lifted BG.
Unlike \cite{nachmani2018deep, Lian2018}, we use individual damping values but one set of edge weights for all MP iterations.
We also dampen the check-node updates by subtracting the incoming V2C messages. 
We learn a single set of parameters by training over a $[1,7]$\,dB SNR range (no SNR deweighting), 
using the same pre-training/training procedure and the same number of training samples as in \fref{sec:simulation_results}.

In \fref{fig:siso_awgn_snr_range}, we consider the difference between BER and BLER performance
with different losses for training, as well with an untrained ``classical'' BP receiver. 
The results show that the BER losses (BCE and MSE) have the best BER-performance
(albeit with minuscule margins, \circled{10})
at low SNR which, without deweighting, dominates training.
Meanwhile, the BLER losses have better BLER-performance in the training-dominating low-SNR region 
(\circled{12}). Somewhat surprisingly---but consistent with the results for the DUIDD receiver
(\fref{sec:simulation_results})---the BLER losses outperform the BER losses in the high-SNR regime, 
which is neglected during training, in terms of BER (\circled{11}) as well as BLER (\circled{13}).
The difference in BLER-performance between a BLER loss and a BER loss 
can be as much as $0.18$\,dB at a BLER of $0.1$\%~(\circled{13}). While this margin is smaller than 
for DUIDD, 
the BLER loss functions are still able to noticeably improve the BLER performance of learned receiver
over an untrained (classical) receiver. In stark contrast, the BLER performance of the receivers
trained with the BCE or MSE loss is identical (\circled{13}) or \emph{worse} (\circled{12})
than that of a completely untrained receiver. 

Analogously to the experiment of \fref{fig:duidd_product}, we now 
consider the impact of different SNR training methods on the SISO AWGN receiver, focusing again on the 
example of the Product loss (\fref{fig:siso_product_training}).
In the left figure, we compare na\"ive training over a large SNR range of $[1,7]$\,dB (R$_{[1,7]\text{dB}}$) with SNR deweighted 
training over that same range (DW$_{[1,7]\text{dB}}$), as well as with training at a single SNR point at $3$\,dB (P$_{3\text{dB}}$).
Just as for the DUIDD receiver (\fref{sec:simulation_results}), the results show that na\"ive training over the range, 
as well as at a single low-SNR point achieves good relative performance at low SNR but comparably bad performance (or, in 
case of P$_{3\text{dB}}$, very bad performance) 
at high SNR (\circled{14}). Meanwhile, SNR deweighted training 
achieves good performance at high SNR as well as at low SNR.
In the right figure of \fref{fig:siso_product_training}, we again only train over a smaller range of SNRs ($[4,7]$\,dB)
in the BLER waterfall region (or a single point therein). Na\"ive training over the range again performs worst 
at high SNR (\circled{16}) and best at low SNR (\circled{15}). Training at a single SNR performs mildly better at high SNR while SNR deweighted training
performs mildly better at low SNR. 
In summary, these results largely confirm the results of \fref{sec:simulation_results}.

\end{showarxiv}

\balance

\end{document}